\newtheorem{thm}{Theorem}[section]
\newtheorem{lem}[thm]{Lemma}
\newcommand\floor[1]{\lfloor#1\rfloor}
\newtheorem{defn}{Definition}[section]
\newtheorem{exmp}{Example}[section]
\newcommand{\RN}[1]{%
  \textup{\uppercase\expandafter{\romannumeral#1}}%
}
\begin{document}
\copyrightyear{2018}
\acmYear{2018}
\setcopyright{acmcopyright}
\acmConference[CCS '18]{2018 ACM SIGSAC Conference on Computer and Communications Security}{October 15--19, 2018}{Toronto, ON, Canada}
\acmBooktitle{2018 ACM SIGSAC Conference on Computer and Communications Security (CCS '18), October 15--19, 2018, Toronto, ON, Canada}
\acmPrice{15.00}
\acmDOI{10.1145/3243734.3243809}
\acmISBN{978-1-4503-5693-0/18/10}
\title{
Preserving Both Privacy and Utility in Network Trace Anonymization}
\author{Meisam Mohammady}
\orcid{1234-5678-9012}
\affiliation{%
  \institution{CIISE, Concordia University, Canada}
}
\email{m_ohamma@encs.concordia.ca.com}
\author{Lingyu Wang}
\affiliation{%
  \institution{CIISE, Concordia University, Canada}
}
\email{wang@ciise.concordia.ca}
\author{Yuan Hong}
\affiliation{%
  \institution{Illinois Institute of Technology}
}
\email{yuan.hong@iit.edu}
\author{Habib Louafi}
\affiliation{%
  \institution{Ericsson Research Security}
}
\email{habib.louafi@ericsson.com}
\author{Makan Pourzandi}
\affiliation{%
  \institution{Ericsson Research Security}
}
\email{makan.pourzandi@ericsson.com}
\author{Mourad Debbabi}
\affiliation{%
  \institution{CIISE, Concordia University, Canada}
}
\email{debbabi@encs.concordia.ca}
\begin{abstract}
As network security monitoring grows more sophisticated, there is an
increasing need for outsourcing such tasks to third-party
analysts. However, organizations are usually reluctant to share their
network traces due to privacy concerns over sensitive information,
e.g., network and system configuration, which may potentially be
exploited for attacks. In cases where data owners are convinced to
share their network traces, the data are typically subjected to
certain anonymization techniques, e.g., CryptoPAn, which replaces real
IP addresses with prefix-preserving pseudonyms. However, most such
techniques either are vulnerable to adversaries with prior knowledge
about some network flows in the traces, or require heavy data
sanitization or perturbation, both of which may result in a
significant loss of data utility. In this paper, we aim to preserve
both privacy and utility through shifting the trade-off from between
privacy and utility to between privacy and computational cost. The key
idea is for the analysts to generate and analyze multiple
anonymized views of the original network traces; those views are
designed to be sufficiently indistinguishable even to adversaries
armed with prior knowledge, which preserves the privacy, whereas one
of the views will yield true analysis results privately retrieved by
the data owner, which preserves the utility. We present the general
approach and instantiate it based on CryptoPAn. We formally analyze
the privacy of our solution and experimentally evaluate it using real
network traces provided by a major ISP. The results show that our
approach can significantly reduce the level of information leakage
(e.g., less than 1\% of the information leaked by CryptoPAn) with
comparable utility.
\end{abstract}

\begin{CCSXML}
<ccs2012>
<concept>
<concept_id>10002978.10002991.10002995</concept_id>
<concept_desc>Security and privacy~Privacy-preserving protocols</concept_desc>
<concept_significance>300</concept_significance>
</concept>
</ccs2012>
\end{CCSXML}

\ccsdesc[500]{Security and privacy~Privacy-preserving protocols}

\keywords{Network trace anonymization, prefix-preserving
  anonymization, CryptoPAn, semantic attacks}

\maketitle
\section{Introduction}
\label{sec:introduction}
As the owners of large-scale network data, today's ISPs and
enterprises usually face a dilemma. As security monitoring and
analytics grow more sophisticated, there is an increasing need for
those organizations to outsource such tasks together with necessary
network data to third-party analysts, e.g., Managed Security Service
Providers (MSSPs)~\cite{outsource}. On the other hand, those
organizations are typically reluctant to share their network trace
data with third parties, and even less willing to publish them, mainly
due to privacy concerns over sensitive information contained in such
data. For example, important network configuration information, such
as potential bottlenecks of the network, may be inferred from network
traces and subsequently exploited by adversaries to increase the
impact of a denial of service attack~\cite{riboni}.

In cases where data owners are convinced to share their network
traces, the traces are typically subjected to some anonymization
techniques. The anonymization of network traces has attracted
significant attention (a more detailed review of related works will be
given in section~\ref{sec:related}). For instance, \textit{CryptoPAn}
replaces real IP addresses inside network flows with prefix preserving
pseudonyms, such that the hierarchical relationships among those
addresses will be preserved to facilitate
analyses~\cite{PP}. Specifically, any two IP addresses sharing a
prefix in the original trace will also do so in the anonymized
trace. However, CryptoPAn is known to be vulnerable to the so-called
\emph{fingerprinting} attack and \emph{injection}
attack~\cite{brekene1,brekene2,a1}. In those attacks, adversaries
either already know some network flows in the original traces (by
observing the network or from other relevant sources, e.g., DNS and
WHOIS databases)~\cite{Burkhart}, or have deliberately injected some
forged flows into such traces.  By recognizing those known flows in
the anonymized traces based on unchanged fields of the flows, namely,
fingerprints (e.g., timestamps and protocols), the adversaries can
extrapolate their knowledge to recognize other flows based on the
shared prefixes~\cite{brekene1}. We now demonstrate such an attack in
details.

\begin{exmp}
\label{exm:1}
In Figure~\ref{tab:motivating-example}, the upper table shows the
original trace, and the lower shows the trace anonymized using
CryptoPAn.  In this example, without loss of generality, we only focus
on source IPs. Inside each table, similar prefixes are highlighted
through similar shading.

\begin{figure}[htb]
	\centering \includegraphics[width=1\linewidth,viewport= 81 160 611 490, clip]{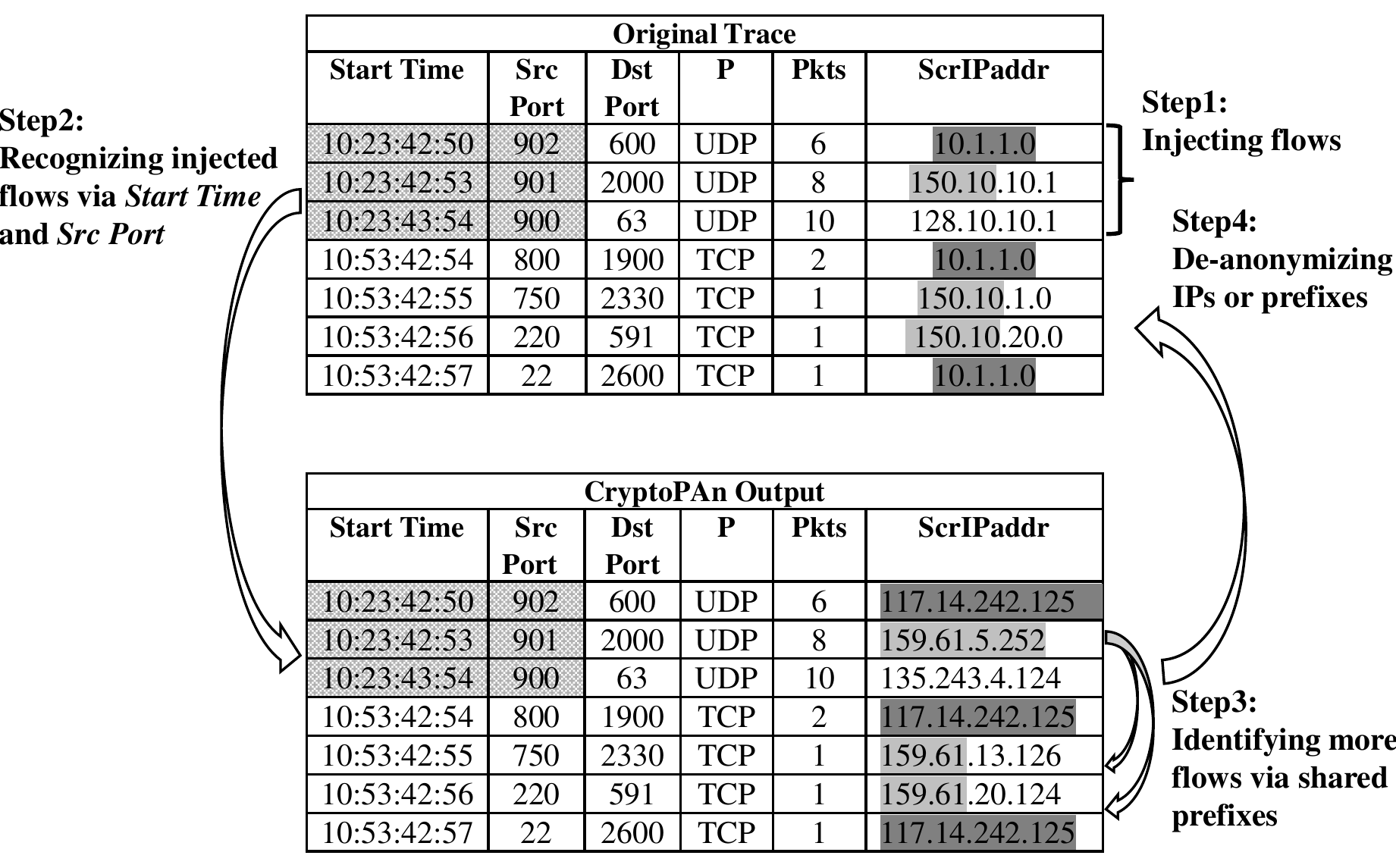}
	\caption{An example of injection attack}
	\label{tab:motivating-example}
\end{figure}

\begin{enumerate}
\item{Step 1:} An adversary has injected three network flows,
  shown as the first three records in the original trace (upper
  table).
\item{Step 2:} The adversary recognizes the three injected flows in
  the anonymized trace (lower table) through unique combinations of
  the unchanged attributes (\emph{Start Time} and \emph{Src
    Port}).
\item{Step 3:} He/she can then extrapolate his/her knowledge from the
  injected flows to real flows as follows, e.g., since prefix $159.61$
  is shared by the second (injected), fifth (real) and sixth (real)
  flows, he/she knows all three must also share the same prefix in the
  original trace. Such identified relationships between flows in the
  two traces will be called \emph{matches} from now on.
\item{Step 4:} Finally, he/she can infer the prefixes or entire IPs of
  those anonymized flows in the original traces, as he/she knows the
  original IPs of his/her injected flows, e.g., the fifth and sixth
  flows must have prefix $150.10$, and the IPs of the fourth and last
  flows must be $10.1.1.0$.
\end{enumerate}
More generally, a powerful adversary who can probe all the subnets
of a network using injection or fingerprinting can potentially
de-anonymize the entire CryptoPAn output via a more sophisticated
\textit{frequency analysis} attack~\cite{brekene1}.
\label{ex:motivating-example}
\end{exmp}
Most subsequent solutions either require heavy data sanitization or
can only support limited types of analysis. In particular, the
$(k,j)$-obfuscation method first groups together $k$ or more flows
with similar fingerprints and then bucketizes (i.e., replacing
original IPs with identical IPs) $j<k$ flows inside each group; all
records whose fingerprints are not sufficiently similar to $k-1$
others will be suppressed~\cite{riboni}. Clearly, both the
bucketization and suppression may lead to significant loss of data
utility. The differentially private analysis method first adds noises
to analysis results and then publishes such aggregated
results~\cite{mcsherry,DP2,DP3}. Although this method may provide privacy
guarantee regardless of adversarial knowledge, the perturbation and
aggregation prevent its application to analyses that demand
accurate or detailed records in the network traces.

In this paper, we aim to preserve both privacy and utility by shifting
the trade-off from between privacy and utility, as seen in most
existing works, to between privacy and computational cost (which has
seen a significant decrease lately, especially with the increasing
popularity of cloud technology). The key idea is for the data owner to
send enough information to the third party analysts such that they can
generate and analyze many different anonymized views of the original
network trace; those anonymized views are designed to be sufficiently
indistinguishable (which will be formally defined in
subsection~\ref{subsec: defn multiview}) even to adversaries armed
with prior knowledge and performing the aforementioned attacks, which
preserves the privacy; at the same time, one of the anonymized views
will yield true analysis results, which will be privately retrieved by
the data owner or other authorized parties, which preserves the
utility. More specifically, our contributions are as follows.

\begin{enumerate}
\item We propose a \emph{multi-view} approach to the prefix-preserving
  anonymization of network traces. To the best of our knowledge, this
  is the first known solution that can achieve similar data utility as
  CryptoPAn does, while being robust against the so-called semantic
  attacks (e.g., fingerprinting and injection). In addition, we
  believe the idea of shifting the trade-off from between privacy and
  utility to between privacy and computational cost may potentially be
  adapted to improve other privacy solutions.
\item In addition to the general multi-view approach, we detail a
  concrete solution based on iteratively applying CryptoPAn to each
  partition inside a network trace such that different partitions are
  anonymized differently in all the views except one (which yields
  valid analysis results that can be privately retrieved by the data
  owner). In addition to privacy and utility, we design the solution
  in such a way that only one \emph{seed} view needs to be sent to the
  analysts, which avoids additional communication cost.
\item We formally analyze the level of privacy guarantee achieved
  using our method, discuss potential attacks and solutions, and
  finally experimentally evaluate our solution using real network
  traces from a major ISP. The experimental results
  confirm that our solution is robust against semantic attacks with a
  reasonable computational cost.
\end{enumerate}

The rest of the paper is organized as follows: Section~\ref{sec:model}
defines our models. Sections~\ref{sec:4} introduces building blocks
for our schemes. Section~\ref{sec:instance} details two concrete
multi-view schemes based on CryptoPAn. Sections~\ref{sec:experiments}
presents the experimental results. Section Appendix~\ref{ffaa} provides
more discussions, and section~\ref{sec:related} reviews the related
work. Finally, section~\ref{sec:conclusion} concludes the paper.

\section{Models} \label{sec:model}

In this section, we describe models for the system and adversaries; we
briefly review CryptoPAn; we provide a high level overview of our
multi-view approach; finally, we define our privacy
property. Essential definitions and notations are summarized in
Table~\ref{tab:math-symbols}.

\begin{table}[hb]
	\caption{The Notation Table.}
	\label{tab:math-symbols}
	\centering
	\begin{adjustbox}{max width=3.3in}
    \LARGE
		\begin{tabular}{c|c|c|c }
			\hline
			Symbol & Definition & Symbol & Definition \\
			\hline
			$\mathcal{L}$ & Original network trace & $\mathcal{L}^*$ & Anonymized trace\\
            \hline
			$A^{\text{IP}}$& IP attributes: source and destination IP & \emph{fp-QI}& Fingerprint quasi identifier\\
			\hline
			$r_i$ & Record number $i$ & $n$& Number of records in $\mathcal{L}$\\
            \hline
			 $\alpha$ & Number of IP prefixes known by the attacker& $\mathcal{S}_{\alpha}$ & The set of addresses known by attacker\\
             \hline
             $\mathcal{S}^*_{0}$ & The set of IP addresses in the seed view & $\mathcal{S}^*_{i}$ & The set of IP addresses in view $i$\\
            \hline
			$PP$ & CryptoPAn function & $RPP$ & Reverse of CryptoPAn\\
            \hline
			$P_i$ & Partition $i$&$m$& Number of partitions in $\mathcal{L}$\\

			\hline
			 $r$& Index of real view&$K_0$, $K_1$ &  Private key and outsourced key\\
			\hline
		\end{tabular}
	\end{adjustbox} \vspace{-5mm}
\end{table}

\subsection{The System and Adversary Model}
\label{subsec:system}
Denote by $\mathcal{L}$ a \textit{network trace} comprised of a set of
\textit{flows} (or records) $r_i$. Each flow includes a confidential
multi-value attribute $A^{\text{IP}}=\{\text{IP}_{src},
\text{IP}_{dst}\}$, and the set of other attributes $A=\{A_i\}$ is
called the \textit{Fingerprint Quasi Identifier (fp-QI)}
\cite{riboni}. Suppose the data owner would like the analyst to
perform an analysis on $\mathcal{L}$ to produce a report $\Gamma$. To
ensure privacy, instead of sending $\mathcal{L}$, an anonymization
function $\mathcal{T}$ is applied to obtain an anonymized version
$\mathcal{L}^*$. Thus, our main objective is to find the anonymization
function $\mathcal{T}$ to preserve both the \emph{privacy}, which
means the analyst cannot obtain $\mathcal{T}$ or $\mathcal{L}$ from
$\mathcal{L^*}$, and \emph{utility}, which means $\mathcal{T}$ must be
prefix-preserving.

In this context, we make following assumptions (similar to those found
in most existing works~\cite{PP,brekene1,brekene2,a1}).
\begin{inparaenum}[i)]
  \item The adversary is a honest-but-curious analyst (in the sense
    that he/she will exactly follow the approach) who can observe
    $\mathcal{L}^*$.
  \item The anonymization function $\mathcal{T}$ is publicly known,
    but the corresponding anonymization key is not known by
    the adversary.
  \item The goal of the adversary is to find all possible matches (as
    demonstrated in Example~\ref{exm:1}, an IP address may be matched
    to its anonymized version either through the fp-QI or shared
    prefixes) between $\mathcal{L}$ and $\mathcal{L}^*$.
  \item Suppose $\mathcal{L}$ consists of $d$ groups each of which
    contains IP addresses with similar prefixes (e.g., those in the
    same subset), and among these the adversary can successfully
    inject or fingerprint $\alpha$ ($\leq d$) groups (e.g., the
    demilitarized zone (DMZ) or other subnets to which the adversary
    has access). Accordingly, we say that the adversary has
    $\mathcal{S}_{\alpha}$ knowledge.
  \item Finally, we assume the communication between the data owner and the analyst is over a secure channel, and we do not consider
    integrity or availability issues (e.g., a malicious adversary may
    potentially alter or delete the analysis report).
\end{inparaenum}

%
\subsection{The CryptoPAn Model}

To facilitate further discussions, we briefly review the CryptoPAn~\cite{PP} model, which gives a baseline for prefix-preserving anonymization.

\begin{defn}\textit{\textbf{Prefix-preserving Anonymization}}~\cite{PP}:
	Given two IP addresses $a = a_1a_2....a_{32}$ and $b =
        b_1b_2....b_{32}$, and a one-to-one function $F(.): \{0,1
        \}^{32} \rightarrow \{0,1 \}^{32}$, we say that
\begin{itemize}
\item[-] $a$ and $b$ share a $k$-bit prefix ($0 \leq k \leq 32$), if and
        only if $a_1a_2....a_k = b_1b_2....b_k$, and $a_{k+1} \neq
        b_{k+1}$.
\item[-] $F$ is prefix-preserving, if, for any $a$ and $b$ that share
  a $k$-bit prefix, $F(a)$ and $F(b)$ also do so.
  \end{itemize}
\end{defn}

Given $a = a_1a_2$ $\cdots a_{32}$ and $F(a)=a'_1a'_2$ $\ldots a'_{32}$, the
prefix-preserving anonymization function $F$ must necessarily satisfy
the canonical form~\cite{PP}, as follows.

\begin{equation}
\label{CryptoPAn}
a'_{i}=a_{i} \oplus f_{i-1}(a_{1}a_{2} \cdots a_{i-1}),~~~i = 1,2,\ldots,32
\end{equation}

where $f_i$ is a cryptographic function which, based on a
$256/128$-bit key $K$, takes as input a bit-string of length $i-1$ and
returns a single bit. Intuitively, the $i^{th}$ bit is anonymized
based on $K$ and the preceding $i-1$ bits to satisfy the
prefix-preserving property. The cryptographic function $f_i$ can be constructed as $L\Big(R
\big(P(a_{1}a_{2} \ldots a_{i-1}),K \big) \Big)$ where $L$ returns the
least significant bit, $R$ can be a block cipher such as
Rijndael~\cite{rindal}, and $P$ is a padding function that expands
$a_1,a_2,\ldots,a_{i-1}$ to match the block size of $R$~\cite{PP}. In
the following, $PP$ will stand for this CryptoPAn function and its
output will be denoted by $a'=PP(a,K)$.

The advantage of CryptoPAn is that it is deterministic and allows
consistent prefix-preserving anonymization under the same
$K$. However, as mentioned earlier, CryptoPAn is vulnerable to
semantic attacks, which will be addressed in next section.
\subsection{The Multi-View Approach}


\label{subsec:Overview}
We propose a novel \emph{multi-view} approach to the prefix-preserving
anonymization of network traces. The objective is to preserve both the
privacy and the data utility, while being robust against semantic
attacks. The key idea is to hide a prefix-preserving anonymized view,
namely, the \emph{real view}, among $N-1$ other \emph{fake views},
such that an adversary cannot distinguish between those $N$ views,
either using his/her prior knowledge or through semantic attacks. Our
approach is depicted in Figure~\ref{fig:2} and detailed below.
\begin{figure*}[tb] \centering
	\includegraphics[width=0.75\linewidth, viewport=35 45 775 460,clip]{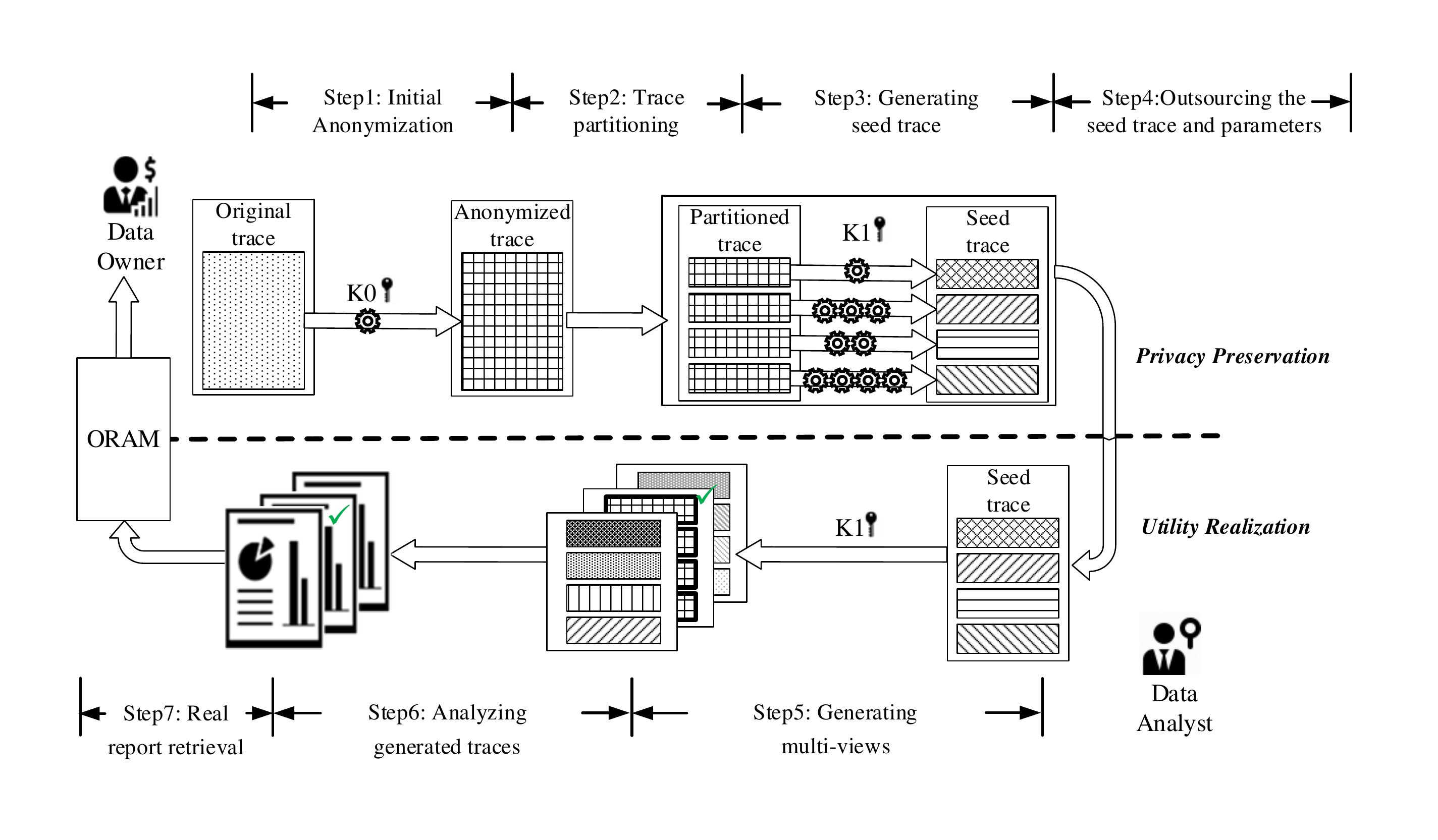}
\caption{An overview of the multi-view approach}
\label{fig:2}
\end{figure*}
\subsubsection{Privacy Preservation at the Data Owner Side}
\begin{description}
	\item[Step 1:] The data owner generates two CryptoPAn keys
          K$_0$ and K$_1$, and then obtains an anonymized trace using
          the anonymization function $PP$ (which will be
          represented by the gear icon inside this figure) and
          K$_0$. This initial anonymization step is designed to
          prevent the analyst from simulating the process as K$_0$
          will never be given out. Note that this anonymized trace is still vulnerable to semantic attack and must undergo the remaining steps. Besides, generating this anonymized trace will actually be slightly more complicated due to \emph{migration} as discussed later in Section~\ref{sec:migration sec}.
	\item[Step 2:] The anonymized trace is then partitioned (the
          partitioning algorithms will be detailed in
          Sections~\ref{sec:pbpp} and \ref{sec:instance}).
        \item[Step 3:] Each partition is anonymized using $PP$ and key K$_1$, but the
          anonymization will be repeated, for a different number of
          times, on different partitions. For example, as the figure
          shows, the first partition is anonymized only once, whereas
          the second for three times, etc. The result of this step is
          called the \textit{seed trace}. The idea is that, as
          illustrated by the different graphic patterns inside the
          seed trace, different partitions have been anonymized
          differently, and hence the seed trace in its entirety is no
          longer prefix-preserving, even though each partition is
          still prefix-preserving (note that this is only a simplified
          demonstration of the \emph{seed trace generator scheme}
          which will be detailed in Section~\ref{sec:instance}).
	\item[Step 4:] The seed trace together with some supplementary
          parameters, including K$_1$, are outsourced to the analyst.
          \end{description}
          \subsubsection{Utility Realization at the Data Analyst Side}
          \begin{description}
	\item[Step 5:] The analyst generates totally $N$ \emph{views} based on the received seed view and supplementary
          parameters. Our design will ensure one of those generated
          views, namely, the \emph{real view}, will have all its
          partitions anonymized in the same way, and thus be
          prefix-preserving (detailed in Section~\ref{sec:instance}),
          though the analyst (adversary) cannot tell
          which exactly is the real view.
	\item[Step 6:] The analyst performs the analysis on all the
          $N$ views and generates corresponding reports.
	\item[Step 7:] The data owner retrieves the analysis report
          corresponding to the real view following an oblivious random
          access memory (ORAM) protocol~\cite{oram}, such that the
          analyst cannot learn which view has been retrieved.
\end{description}
Next, we define the privacy property for the multi-view solution.

\subsection{Privacy Property against Adversaries}
\label{subsec: defn multiview}

Under our multi-view approach, an analyst (adversary) will receive $N$
different traces with identical fp-QI attribute values and different
$A^{IP}$ attribute values. Therefore, his/her goal now is to identify
the real view among all the views, e.g., he/she may attempt to observe
his/her injected or fingerprinted flows, or he/she can launch the
aforementioned semantic attacks on those views, hoping that the real
view might respond differently to those attacks.  Therefore, the main
objective in designing an effective multi-view solution is to satisfy
the \emph{indistinguishability} property which means the real view
must be sufficiently indistinguishable from the fake views under
semantic attacks. Motivated by the concept of \emph{Differential
  Privacy}~\cite{dworks}, we propose the $\epsilon$-indisinguishablity
property as follows.

\begin{defn}\textit{\textbf{$\epsilon$-Indisinguishable Views:}}
\label{dfn:d.2}
A multi-view solution is said to satisfy
\emph{$\epsilon$-Indistinguishability} against an
$\mathcal{S}_{\alpha}$ adversary if and only if (both probabilities
below are from the adversary's point of view)
\begin{align}
\label{eq:mms}
 \exists \ \epsilon \geq 0, \ \textrm{s.t.} \ \forall i \in \{1,2,
 \cdots, N\} \Rightarrow \nonumber \\ & \hspace{-5.1cm} e^{-\epsilon}
 \leq \frac{Pr(\text{view $i$ may be the real view
     })}{Pr(\text{view $r$ may be the real view})} \leq
 e^{\epsilon}
\end{align}
\end{defn}


In Defintion~\ref{dfn:d.2}, a smaller $\epsilon$ value is more
desirable as it means the views are more indistinguishable from the real view to the adversary. For example, an extreme case of $\epsilon=0$
would mean all the views are equally likely to be the real view to the
adversary (from now on, we call these views the \emph{real view candidates}). In practice, the value of $\epsilon$ would depend on the
specific design of a multi-view solution and also on the adversary's
prior knowledge, as will be detailed in following sections.

Finally, since the multi-view approach requires outsourcing some
supplementary parameters, we will also need to analyze the
security/privacy of the communication protocol (privacy leakage in the
protocol, which complements the privacy analysis in output of the
protocol) in semi-honest model under the theory of secure multiparty
computation (SMC)~\cite{Yao86},~\cite{goldrich} (see
section~\ref{sec:secanal2}).

\section{The Building Blocks}
\label{sec:4}
In this section, we introduce the building blocks for our multi-view
mechanisms, namely, the \emph{iterative and reverse CryptoPAn},
\emph{partition-based} prefix preserving, and \emph{CryptoPAn with
  IP-collision (migration)}.

\subsection{Iterative and Reverse CryptoPAn}
As mentioned in section~\ref{subsec:Overview}, the multi-view approach
relies on iteratively applying a prefix preserving function $PP$ for
generating the seed view. Also, the analyst will invert such an
application of $PP$ in order to obtain the real view (among fake
views). Therefore, we first need to show how $PP$ can be iteratively
and reversely applied.

First, it is straightforward that $PP$ can be iteratively applied, and
the result also yields a valid prefix-preserving
function. Specifically, denote by $PP^{j}(a,K)$ ($j>1$) the
\emph{iterative} application of $PP$ on IP address $a$ using key $K$,
where $j$ is the number of iterations, called the \emph{index}. For
example, for an index of two, we have $PP^{2}(a,K)= PP(PP(a,K), K)$.
It can be easily verified that given any two IP addresses $a$ and $b$
sharing a k-bit prefix, $PP^{i}(a,K)$ and $PP^{i}(b,K)$ will always
result in two IP addresses that also share a k-bit prefix (i.e.,
$PP^i$ is prefix-preserving). More generally, the same also holds for
applying $PP$ under a sequence of indices and keys (for both IPs),
e.g., $PP^{i}(PP^{j}(a,K_0),K_1)$ and $PP^{i}(PP^{j}(b,K_0),K_1)$ will
also share k-bit prefix. Finally, for a set of IP addresses
$\mathcal{S}$, iterative $PP$ using a single key $K$ satisfies the following
associative property:

\begin{eqnarray}
& \hspace{-.38cm}  \forall \mathcal{S},K, \ \ \text{and} \ \ i, j \in \mathbb{Z}
 \; (integers): \ \ PP^{i}\big(PP^{j}(\mathcal{S},K), K\big)\nonumber \\
&   = PP^{j}\big(PP^{i} (\mathcal{S},K),
  K\big)=PP^{(i+j)} \big(\mathcal{S},K\big)
\label{eq:keyproperty}
\end{eqnarray}

On the other hand, when a negative number is used as the index, we
have a \emph{reverse} iterative CryptPAn function ($RPP$ for short),
as formally characterized in
Theorem~\ref{thm:reverese-prefic-preserving} (the proof is in
Appendix~\ref{subsec:proof1}).
\begin{thm}
\label{thm:reverese-prefic-preserving}
Given IP addresses $a=a_1a_2$ $\cdots a_{32}$ and $b=PP(a,K)=b_1b_2 \cdots b_{32}$, the function $RPP(\cdot):\{0,1\}^{32} \rightarrow \{0,1\}^{32}$ defined as
	\begin{equation}
	\label{RPP}
	\begin{split}
	&RPP(b,K)=c=c_1c_2 \cdots c_{32}\\
	& \text{where} \ c_{i}=b_{i} \oplus f_{i-1}(c_1 \cdots c_{i-1})
	\end{split}
	\end{equation}
is the inverse of the $PP$ function given in Equation~\ref{CryptoPAn}, i.e., $c=a$.
\end{thm}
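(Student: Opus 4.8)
The plan is to prove $c=a$ by induction on the bit index $i$, exploiting the fact that both the forward map of Equation~\ref{CryptoPAn} and the candidate inverse of Equation~\ref{RPP} have a \emph{causal} (lower-triangular) structure: the $i$-th output bit depends only on the first $i-1$ input bits together with the $i$-th input bit, so the inverse can be peeled off one bit at a time. Concretely, I would establish the inductive statement that $c_1 \cdots c_{i-1} = a_1 \cdots a_{i-1}$ forces $c_i = a_i$, which then yields $c=a$ coordinate-wise.

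For the base case $i=1$, the function $f_0$ takes the empty string as input, so Equation~\ref{CryptoPAn} gives $b_1 = a_1 \oplus f_0$ while Equation~\ref{RPP} gives $c_1 = b_1 \oplus f_0$. Substituting one into the other yields $c_1 = a_1 \oplus f_0 \oplus f_0 = a_1$, using that XOR against the same bit is its own inverse. For the inductive step, I would assume $c_1 \cdots c_{i-1} = a_1 \cdots a_{i-1}$. Since $f_{i-1}$ is a deterministic function of exactly these $i-1$ bits (and the fixed key $K$), it follows that $f_{i-1}(c_1 \cdots c_{i-1}) = f_{i-1}(a_1 \cdots a_{i-1})$. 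Substituting the forward relation $b_i = a_i \oplus f_{i-1}(a_1 \cdots a_{i-1})$ into the reverse relation of Equation~\ref{RPP},
\begin{equation*}
c_i = b_i \oplus f_{i-1}(c_1 \cdots c_{i-1}) = a_i \oplus f_{i-1}(a_1 \cdots a_{i-1}) \oplus f_{i-1}(a_1 \cdots a_{i-1}) = a_i ,
\end{equation*}
so the two telescoping $f_{i-1}$ terms cancel. This closes the induction and establishes $c=a$.

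The only subtlety — which is really the feature that makes the argument go through rather than an obstacle — is confirming that the recursion defining $c$ is well-founded, i.e., that each $c_i$ depends solely on the already-computed bits $c_1, \ldots, c_{i-1}$ and never on $c_i, \ldots, c_{32}$. This is guaranteed by the canonical form, in which $f_{i-1}$ reads precisely the first $i-1$ bits. Crucially, the inversion requires \emph{no} injectivity or invertibility of the cryptographic primitive $f_i$ itself: it relies only on the self-inverse property of XOR and on this causal dependency pattern, which is exactly why the same key $K$ suffices for both the forward and reverse directions.
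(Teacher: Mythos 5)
Your proposal is correct and follows essentially the same route as the paper's own proof: a bitwise induction in which the inductive hypothesis $c_1\cdots c_{i-1}=a_1\cdots a_{i-1}$ makes the two $f_{i-1}$ terms coincide and cancel under XOR. The extra remarks on the well-foundedness of the recursion and on not needing invertibility of $f_i$ are sound clarifications but do not change the argument.
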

\subsection{Partition-based Prefix Preserving}
\label{sec:pbpp}
As mentioned in section~\ref{subsec:Overview}, the central idea of the
multi-view approach is to divide the trace into partitions (Step $2$),
and then anonymize those partitions iteratively, but for different
number of iterations (Step $3$). In this subsection, we discuss this
concept.

Given $\mathcal{S}$ as a set of $n$ IP addresses, we may divide
$\mathcal{S}$ into partitions in various ways, e.g., forming
equal-sized partitions after sorting $\mathcal{S}$ based on either the
IP addresses or corresponding timestamps. The partitioning scheme will
have a major impact on the privacy, and we will discuss two such
schemes in next section.

Once the trace is divided into partitions, we can then apply $PP$ on
each partition separately, denoted by $PP(P_{i},K)$ for the $i^{th}$
partition.  Specifically, given $\mathcal{S}$ divided as a set of $m$
partitions \{$P_1, P_2, \cdots, P_m$\}, we define a \emph{key vector}
$V = \begin{bmatrix} v_1 & v_2$ $& \cdots & v_m \end{bmatrix}$ where
each $v_i$ is a positive integer indicating the number of times $PP$
should be applied to $P_i$, namely, the \textit{key index} of $P_i$.
Given a cryptographic key $K$, we can then define the
\textit{partition-based} prefix preserving anonymization of
$\mathcal{S}$ as $PP(\mathcal{S}, V, K)= \big[PP^{v_1}(P_1,K),
  \ PP^{v_2}(P_2, K),$ $\ldots , PP^{v_{m}}(P_{m},K)\big]$.

We can easily extend the associative property in
Equation~\ref{eq:keyproperty} to this case as the following (which
will play an important role in designing our multi-view mechanisms in
next section).

\begin{equation}
\label{keyproperty1}
PP[PP(\mathcal{S}, V_1, K), V_2, K]=PP (\mathcal{S}, (V_1+V_2),K)
\end{equation}

\subsection{IP Migration: Introducing IP-Collision into CryptoPAn}
\label{sec:migration sec}
As mentioned in section~\ref{subsec:Overview}, once the analyst
(adversary) receives the seed view, he/she would generate many
indistinguishable views among which only one, the real view, will be
prefix preserving across all the partitions, while the other (fake)
views do not preserve prefixes across the partitions (Step
5). However, the design would have a potential flaw under a direct
application of CryptoPAn. Specifically, since the original CryptoPAn
design is collision resistant~\cite{PP}, the fact that similar
prefixes are only preserved in the real view across partitions would
allow an adversary to easily distinguish the real view from
others.
\begin{figure}[ht]
		\includegraphics[width=1\linewidth,viewport= 210 160 680 345,clip]{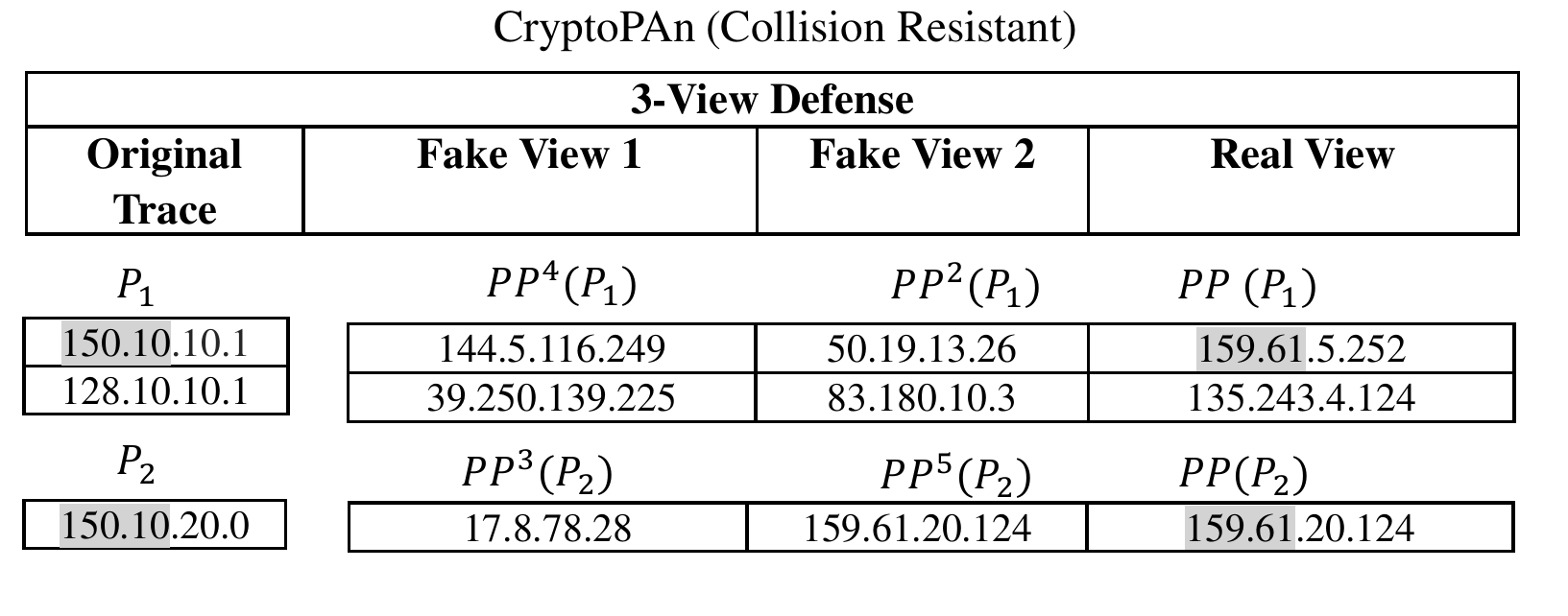}
		\caption{ An example showing only the real view contains shared prefixes (which allows it to be identified by adversaries)}
		\label{fig:migration0}
\end{figure}

\begin{exmp}
\label{ex:mig0}
Figure~\ref{fig:migration0} illustrates this flaw. The original trace
includes three different addresses and has been divided into two
partitions $P_1$ and $P_2$. As illustrated in the figure, the real
view is easily distinguishable from the two fake views as the shared
prefixes ($159.61$) between addresses in $P_1$ and $P_2$ only appear
in the real view. This is because, since the partitions in fake views have different rounds of PP
applied, and since the original CryptoPan design is collision resistant~\cite{PP}, the shared prefixes will no longer appear.
Hence, the adversary can easily distinguish the real view from others.
\end{exmp}

To address this issue, our idea is to create collisions between
different prefixes in fake views, such that adversaries cannot tell
whether the shared prefixes are due to prefix preserving in the real
view, or due to collisions in the fake views. However, due to the
collision resistance property of CryptoPAn~\cite{PP}, there is only a
negligible probability that different prefixes may become identical
even after applying different iterations of PP, as shown in the above
example. Therefore, our key idea of \emph{IP migration} is to first
replace the prefixes of all the IPs with common values (e.g., zeros),
and then fabricate new prefixes for them by applying different
iterations of PP. This IP migration process is designed to be
prefix-preserving (i.e,. any IPs sharing prefixes in the original
trace will still share the new prefixes), and to create collisions in
fake views since the addition of key indices during view generation
can easily collide. Next, we demonstrate this IP migration
technique in an example.

\begin{figure}[ht]
		\includegraphics[width=1\linewidth,viewport= 230 232 800 440,clip]{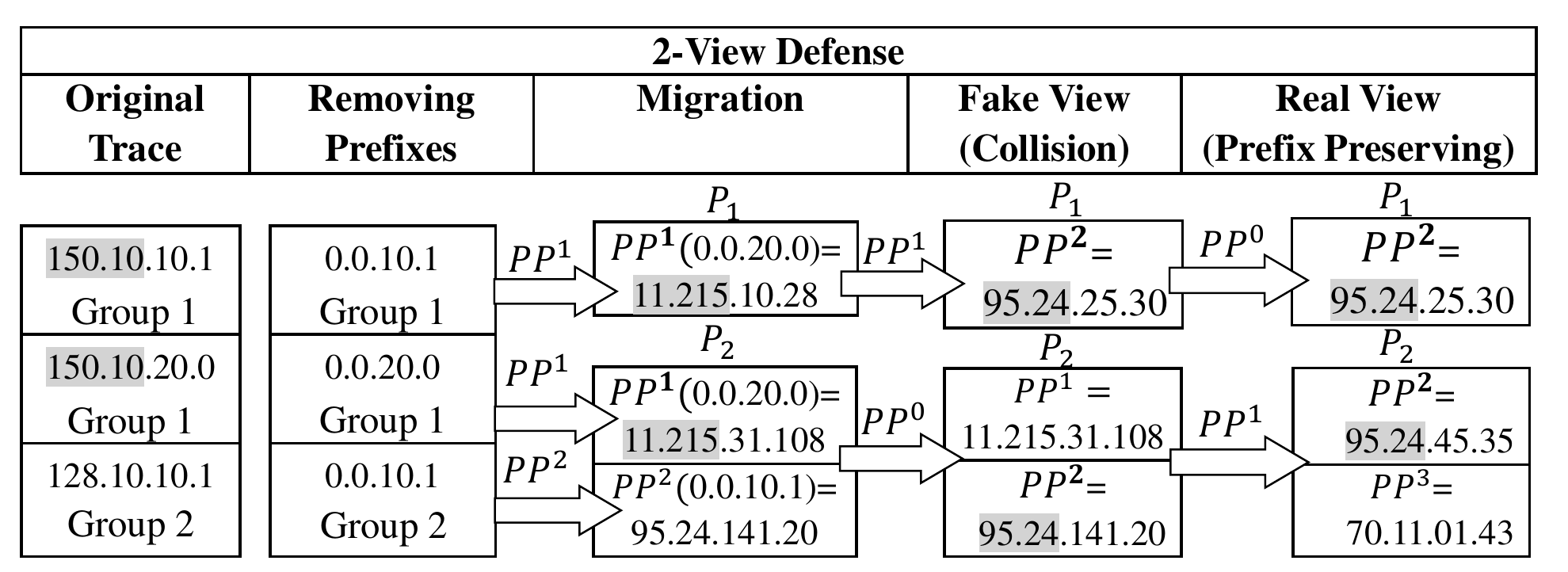}
		\caption{An example showing, by removing shared prefixes and fabricating them with the same rounds of PP, both fake view and real view may now contain fake or real shared prefixes (which makes them indistinguishable)}
		\label{fig:migration1}
\end{figure}
\begin{exmp}
\label{ex:mig1}
In Figure~\ref{fig:migration1}, the first stage shows the same
original trace as in Example~\ref{ex:mig0}. In the second stage, we
``remove'' the prefixes of all IPs and replace them with all zeros (by
xoring them with their own prefixes). Next, in the third stage, we
fabricate new prefixes by applying different iterations of $PP$ in a
prefix preserving manner, e.g., the first two IPs still sharing a
common prefix ($11.215$) different from that of the last IP.  However,
note that whether two IPs share the new prefixes only depends on their
key indices now, e.g., $1$ for first two IPs and $2$ for the last
IP. This is how we can create collisions in the next stage (the fake
view) where the first and last IPs coincidentally share the same
prefix $95.24$ due to their common key indices $2$ (however, note
these are the addition results of different key indices from the
migration stage and the view generation stage, respectively). Now, the
adversary will not be able to tell which of those views is real based
on the existence of shared prefixes.
\end{exmp}
We now formally define the migration function in the following.
\begin{defn}\textit{\textbf{Migration Function}}:
\label{def:mig}
Let $\mathcal{S}$ be a set of IP addresses consists of $d$ groups of IPs $S_1,S_2,\cdots,S_{d}$ with distinct prefixes $s_1,s_2,\cdots,s_{d}$ respectively, and $K$ be a random CryptoPAn key. Migration function $M:$ $\mathcal{S} \times \mathcal{C}(\text{set of positive integers}) \rightarrow \mathcal{S}^*$ is defined as
\begin{align}
\label{migfunc}
&\mathcal{S} ^*=M(\mathcal{S})=  \{S^*_{i}|\forall i\in\{1,2,\cdots,d\} \} \nonumber\\
& \text{where   } S^*_{i}=\{PP^{c_{i}}(s_{i}\oplus a_{j}, K), \forall a_{j} \in S_{i}\}
\end{align}
where $\mathcal{C}=PRNG(d,d)=\{c_1,c_2,$ $\cdots,c_{d}\}$ is the set of $d$ non-repeating random key indices generated between $[1,d]$ using a cryptographically secure pseudo random number generator.
\end{defn}

\section{$\epsilon$-Indistinguishable Multi-view Mechanisms}
\label{sec:instance}

We first present a multi-view mechanism based on IP partitioning in
Section~\ref{subsec:methd}. We then propose a more refined scheme
based on distinct IP partitioning with key vector generator in
Section~\ref{scheme2}.

\subsection{Scheme~\RN{1}: IP-based Partitioning Approach}
\label{subsec:methd}
To realize the main ideas of multi-view anonymization, as introduced
in Section~\ref{subsec:Overview}, we need to design concrete schemes
for each step in Figure~\ref{fig:2}.  The key idea of our first scheme
is the following. We divide the original trace in such a way that all
the IPs sharing prefixes will always be placed in the same
partition. This will prevent the attack described in
Section~\ref{sec:migration sec}, i.e., identifying the real view by
observing shared prefixes across different partitions.  As we will
detail in Section~\ref{sec:discussionscheme1}, this scheme can achieve
perfect indistinguishability without the need for IP migration
(introduced in Section~\ref{sec:migration sec}), although it has its
limitations which will be addressed in our second scheme. Both schemes
are depicted in Figure~\ref{fig:detailed-approach} and detailed below.

\begin{figure*}[ht]
		\includegraphics[width=0.68\linewidth,  viewport= 142 40 802 653,clip]{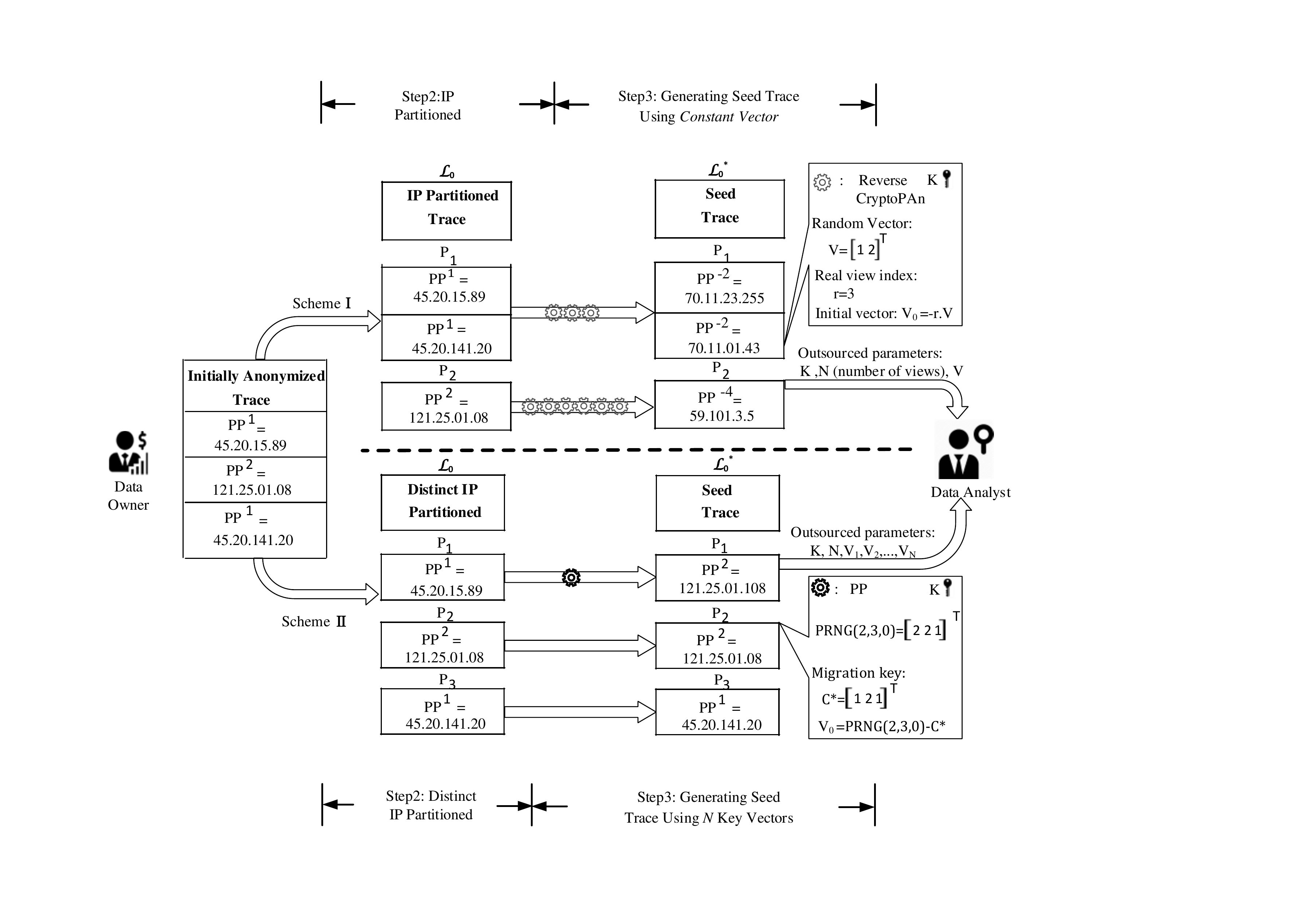}
		\caption{An example of a trace which undergoes multi-view schemes~\RN{1}, \RN{2}}
		\label{fig:detailed-approach}
\end{figure*}

Specifically, our first scheme includes three main steps: privacy
preservation (Section~\ref{sec:netwok-log-anonymization}), utility
realization (Section~\ref{sec:network-log-analysis}), and analysis
report extraction (Section~\ref{sec:analysis-report-extrcation}).

\subsubsection{Privacy Preservation (Data Owner)}
\label{sec:netwok-log-anonymization}
The data owner performs a set of actions to generate the seed trace
$\mathcal{L}^*_0$ together with some parameters to be sent to the
analyst for generating different views. These actions are summarized
in Algorithm~\ref{alg:data-owner-actions}, and detailed in the
following.

\begin{itemize}
\item{\emph{Applying CryptoPAn using $K_0$:}} First, the data owner generates
  two independent keys, namely $K_0$ (key used for initial
  anonymization, which never leaves the data owner) and $K$ (key used
  for later anonymization steps). The data owner then generates the
  initially anonymized trace $\mathcal{L}_0 $=$PP
  (\mathcal{L},K_0)$. This step is designed to prevent the adversary
  from simulating the scheme, e.g., using a brute-force attack to
  revert the seed trace back to the original trace in which he/she can
  recognize some original IPs. The leftmost block in
  Figure~\ref{fig:detailed-approach} shows an example of the initially
  anonymized trace.

\item{\emph{Trace partitioning based on IP-value:}} The initially anonymized
  trace is partitioned based on IP values. Specifically, let
  $\mathcal{S}$ be the set of IP addresses in $\mathcal{L}_0$
  consisting of $d$ groups of IPs $S_1,S_2,\cdots,S_{d}$ with distinct
  prefixes $s_1,s_2,\cdots,s_{d}$, respectively; we divide
  $\mathcal{L}_0$ to $d$ partitions, each of which is the collection
  of all records containing one of these groups. For example, the
  upper part of Figure~\ref{fig:detailed-approach} depicts how our
  first scheme works. The set of three IPs are divided into two
  partitions where $P_1$ includes both IPs sharing the same prefix,
  $45.20.15.89$ and $45.20.141.20$, whereas the last IP $121.25.01.08$
  goes to $P_2$ since it does not share a prefix with others.

\item{\emph{Seed trace creation:}} The data owner in this step generates the
  seed trace using a $d$-size (recall that $d$ is the number of
  partitions) random key vector.
\begin{itemize}
\item{\emph{Generating a random key vector:}} The data owner generates a
  random vector $V$ of size $d$ using a cryptographically secure
  pseudo random number generator $PRNG(d$ $,d) $ (which generates a set
  of $d$ non-repeating random numbers between $[1,d]$). This vector
  $V$ and the key $K$ will later be used by the analyst to generate
  different views from the seed trace.  For example, in
  Figure~\ref{fig:detailed-approach}, for the two partitions,
  $V=\begin{bmatrix} 1&2 \end{bmatrix}$ is generated. Finally, the
  data owner chooses the total number of views $N$ to be generated
  later by the analyst, based on his/her requirement about privacy and
  computational overhead, since a larger $N$ will mean more
  computation by both the data owner and analyst but also more privacy
  (more real view candidates will be generated which we will further study this through experiments later).
\item{\emph{Generating a seed trace key vector and a seed trace:}} The
  data owner picks a random number $r \in [1,N]$ and then computes
  $V_0=-r \cdot V$ as the key vector of seed trace. Next, the data
  owner generates the seed trace as $\mathcal{L}^*_0 = PP
  (\mathcal{L}_0,V_0, K)$. This ensures, after the analysts applies
  exactly $r$ iterations of $V$ on the seed trace, he/she would get
  $\mathcal{L}_0$ back (while not being aware of this fact since
  he/she does not know $r$). For example, in
  Figure~\ref{fig:detailed-approach}, $r=3$ and $V_0=\begin{bmatrix}
  -3 & -6 \end{bmatrix}$. We can easily verify that, if the analyst
  applies the indices in $V$ on the seed trace three times, the
  outcome will be exactly $\mathcal{L}_0$ (the real view). This can be
  more formally stated as follows (the $r^{th}$ view
  $\mathcal{L}_r^*$  is actually the real
  view).
	\begin{equation*}
	\begin{split}
	\mathcal{L}^*_r &=  PP(\mathcal{L}_0^*,r \cdot V,K), ~~~~~\text{using~\eqref{keyproperty1}}\\
	&= PP(\mathcal{L}_0,V_0+ r \cdot V, K), ~~~~~\text{using~\eqref{keyproperty1}}\\
	&= PP(\mathcal{L}_0,-r \cdot V + r\cdot V,K)\\
    &= \mathcal{L}_0
	\end{split}	
	\end{equation*}	
\end{itemize}

\item{Outsourcing:} Finally, the data owner outsources $\mathcal{L}^*_0$, $V$, $N$ and $K$ to the analyst.

\end{itemize}

\subsubsection{Network Trace Analysis (Analyst)}
\label{sec:network-log-analysis}
The analyst generates the $N$ views requested by the data owner, which
is summarized in Algorithm~\ref{alg:analyst-actions} in
Appendix~\ref{algs} and formalized below.
\begin{equation}
\begin{split}
&\mathcal{L}^*_0, ~ \text{~is the seed view} \\
&\mathcal{L}^*_i  = PP(\mathcal{L}^*_{i-1},V,K),~~~ i \in \{1,\ldots,N\}
\end{split}
\end{equation}
Since boundaries of partitions must be recognizable by the
analyst to allow him/her to generate the views, we modify the
time-stamp of the records that are on the boundaries of each partition
by changing the most significant digit of the time stamps which is
easy to verify and does not affect the analysis as it can be reverted
back to its original format by the analyst.  Next, the analyst
performs the requested analysis on all $N$ views and generates $N$
analysis reports $\Gamma_1, \Gamma_2, \cdots, \Gamma_N$.
\subsubsection{Analysis Report Extraction (Data Owner)}
\label{sec:analysis-report-extrcation}
The data owner is only interested in the analysis report that is
related to the real view, which we denote by $\Gamma_r$. To minimize
communication overhead, instead of requesting all the analysis reports
$\Gamma_i$ of the generated views, the data owner can fetch only the
one that is related to the real view $\Gamma_r$. He/she can employ the
\textit{oblivious random accesses memory} (ORAM)~\cite{oram} to do so
without revealing the information to the analyst (we will discuss
alternatives in Section~\ref{sec:related}).

\subsubsection{Security Analysis}
\label{sec:discussionscheme1}
We now analyze the level of indistinguishability provided by the
scheme. Recall the indistinguishability property defined in
Section~\ref{sec:model}; a multi-view mechanism is
$\epsilon$-indistinguishable if and only if
\begin{align*}
\exists \ \epsilon \geq 0, \  \textrm{s.t.} \ \forall i \in \{1,2, \cdots, N\} \Rightarrow \nonumber \\
 & \hspace{-5.1cm} e^{-\epsilon} \leq \frac{Pr(\text{view $i$ may be the real view})}{Pr(\text{view $r$ may be the real view})} \leq e^{\epsilon}
\end{align*}
The statement inside the probability is the adversary's decision on a
view, declaring it as fake or a \emph{real view candidate}, using
his/her $\mathcal{S}_{\alpha}$ knowledge. Moreover, we note that
generated views differ only in their IP values (fp-QI attributes are
similar for all the views). Hence, the adversary's decision can only
be based on the published set of IPs in each view through comparing
shared prefixes among those IP addresses which he/she already know
($\mathcal{S}_{\alpha}$). Accordingly, in the following, we define a
function to represent all the prefix relations for a set of IPs.

\begin{lem}
For two IP addresses $a$ and $b$, function $Q:\{0,1\}^{32} \times\{0,1\}^{32}\rightarrow \mathbb{N}$ returns the number of bits in the prefix shared between $a$ and $b$
\begin{align*}
Q(a, b)=31-\floor{log^{a \oplus b}_2}
\end{align*}
where $\floor{.}$ denotes the \emph{floor} function.
\end{lem}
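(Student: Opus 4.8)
The plan is to prove the identity by connecting the combinatorial notion of a shared $k$-bit prefix to the position of the leading one-bit in the integer $a \oplus b$. First I would pin down the positional convention implicit in the paper, namely that each address $a = a_1 a_2 \cdots a_{32}$ is read most-significant-bit first, so that, viewed as a non-negative integer, $a = \sum_{i=1}^{32} a_i\, 2^{32-i}$; the same convention applies to $a \oplus b$, whose $i$-th bit is exactly $a_i \oplus b_i$.

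Next I would invoke the definition of a shared $k$-bit prefix from the preceding subsection: $a$ and $b$ share a $k$-bit prefix precisely when $a_i = b_i$ for all $i \leq k$ and $a_{k+1} \neq b_{k+1}$. Translating this into the bits of $a \oplus b$, the first $k$ bits vanish while bit $k+1$ equals one, so the leading (most significant) one-bit of $a \oplus b$ sits in position $k+1$. Because position $k+1$ carries weight $2^{32-(k+1)} = 2^{31-k}$, the integer $a \oplus b$ satisfies $2^{31-k} \leq a \oplus b < 2^{32-k}$, which is exactly the statement that $\floor{log^{a \oplus b}_2} = 31 - k$. Rearranging yields $k = 31 - \floor{log^{a \oplus b}_2}$, and since $Q(a,b)$ is defined to be this shared-prefix length $k$, the claimed formula follows.

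Finally I would address the degenerate case $a = b$, for which $a \oplus b = 0$ and the logarithm is undefined; here the two addresses agree on all $32$ bits, so no index $k+1 \leq 32$ witnesses a disagreement, and the value is simply taken to be $32$ (consistent with the convention that the definition of a $k$-bit prefix demands a differing bit $a_{k+1}$). The only genuine subtlety — and the step I would check most carefully — is the indexing convention, since an off-by-one in whether $a_1$ is the MSB or the LSB, or whether bit positions are counted from $0$ or from $1$, would shift the additive constant away from $31$. Once that convention is fixed, the remainder is a one-line consequence of the relation between the leading bit of an integer and the floor of its base-two logarithm, so I do not anticipate any real obstacle beyond bookkeeping.
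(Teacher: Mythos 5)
Your argument is correct, and it is worth noting that the paper itself offers no proof of this lemma at all --- it is stated bare, essentially as a definition --- so your derivation supplies the justification the authors omitted. The chain you use is the standard one: fixing the MSB-first convention $a=\sum_{i=1}^{32}a_i 2^{32-i}$, observing that a shared $k$-bit prefix with $a_{k+1}\neq b_{k+1}$ forces the leading one of $a\oplus b$ into position $k+1$ with weight $2^{31-k}$, concluding $2^{31-k}\leq a\oplus b<2^{32-k}$ and hence $\floor{\log_2(a\oplus b)}=31-k$. You are also right that the only real hazard is the indexing bookkeeping, and your explicit handling of the degenerate case $a=b$ (where $a\oplus b=0$ and the formula is undefined, so $Q(a,b)$ must be set to $32$ by convention) addresses a corner the paper silently ignores; this case actually matters downstream, since Definition 4.2 applies $Q$ to all pairs $a_i,a_j\in\mathcal{S}$ including identical addresses in the multiset. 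In short, the proposal is complete and correct, and is more careful than the source.
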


\begin{defn}
	\label{dfn:4.2}
For a multiset of $n$ IP addresses $\mathcal{S}$, the \emph{Prefixes
  Indicator Set} (PIS) $\mathcal{R}(\mathcal{S})$ is defined as
follows.
\begin{align}
\label{eq:PIS}
 \mathcal{R}(\mathcal{S})=\{Q(a_i, a_j)| \  \forall a_{i},a_{j} \in\mathcal{S},  i,j \in \{1,2, \cdots, n\} \}
\end{align}
\end{defn}
Note that PIS remains unchanged when CryptoPAn is applied on $\mathcal{S}$, i.e., $\mathcal{R}(PP(\mathcal{S},K))=\mathcal{R}(\mathcal{S})$. In addition, since the multi-view solution keeps all the other attributes intact, the adversary can identify his/her pre-knowledge in each view and construct prefixes indicator sets out of them. Accordingly, we denote by $\mathcal{R}_{\alpha,i}$ the PIS constructed by the adversary in view $i$.
\begin{defn}
Let $\mathcal{R}_{\alpha}$ be the PIS for the adversary's knowledge, and $\mathcal{R}_{\alpha,i}$, $i \in \{1,\cdots,N\}$ be the PIS constructed by the adversary in view $i$. A multi-view solution then generates \emph{$\epsilon$-indistinguishable} views against an $\mathcal{S}_{\alpha}$ adversary if and only if
\begin{align*}
 \exists \ \epsilon \geq 0, \  \textrm{s.t.} \ \forall i \in \{1,2, \cdots, N\} \Rightarrow
\end{align*}
\begin{align}
\label{indis:eqn}
e^{-\epsilon} \leq \frac{Pr(\mathcal{R}_{\alpha,i}=\mathcal{R}_{\alpha})}{Pr(\mathcal{R}_{\alpha,r}=\mathcal{R}_{\alpha})} \leq e^{\epsilon}
\end{align}
\end{defn}
\begin{lem}
The indistinguishability property, defined in equation~\ref{indis:eqn} can be simplified to
\begin{align}
\label{eq:epsasl}
 & \exists \ \epsilon \geq 0, \  \textrm{s.t.} \ \forall i \in \{1,2, \cdots, N\} \Rightarrow \nonumber\\ \nonumber& Pr(\mathcal{R}_{\alpha,i}=\mathcal{R}_{\alpha}) \geq e^{-\epsilon} \\
\end{align}
\end{lem}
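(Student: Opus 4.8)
The plan is to evaluate the denominator $Pr(\mathcal{R}_{\alpha,r}=\mathcal{R}_{\alpha})$ explicitly and show that it equals $1$, which collapses the two-sided ratio in Equation~\eqref{indis:eqn} into the single lower bound of Equation~\eqref{eq:epsasl}. First I would argue that the real view (view $r$) is prefix-preserving across \emph{all} partitions: by the construction in Section~\ref{sec:netwok-log-anonymization} we have $\mathcal{L}^*_r=\mathcal{L}_0=PP(\mathcal{L},K_0)$, so the real view is simply a single prefix-preserving CryptoPAn image of the original trace. Consequently, every prefix relation among the adversary's known addresses is carried over intact into the real view.

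Next, I would invoke the already-established fact that CryptoPAn leaves the Prefixes Indicator Set unchanged, i.e., $\mathcal{R}(PP(\mathcal{S},K))=\mathcal{R}(\mathcal{S})$. Applying this to the restriction of the trace to the adversary's $\mathcal{S}_{\alpha}$ knowledge yields $\mathcal{R}_{\alpha,r}=\mathcal{R}_{\alpha}$ deterministically, hence $Pr(\mathcal{R}_{\alpha,r}=\mathcal{R}_{\alpha})=1$. Substituting this denominator into Equation~\eqref{indis:eqn} reduces the condition to $e^{-\epsilon}\leq Pr(\mathcal{R}_{\alpha,i}=\mathcal{R}_{\alpha})\leq e^{\epsilon}$.

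It then remains to discard the upper bound. Since $Pr(\mathcal{R}_{\alpha,i}=\mathcal{R}_{\alpha})$ is a probability it never exceeds $1$, while $e^{\epsilon}\geq 1$ for every $\epsilon\geq 0$; thus the inequality $Pr(\mathcal{R}_{\alpha,i}=\mathcal{R}_{\alpha})\leq e^{\epsilon}$ holds automatically and carries no information. Only the lower bound $Pr(\mathcal{R}_{\alpha,i}=\mathcal{R}_{\alpha})\geq e^{-\epsilon}$ survives, which is exactly Equation~\eqref{eq:epsasl}.

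The step I expect to be the main obstacle — or at least the one requiring the most care — is justifying $Pr(\mathcal{R}_{\alpha,r}=\mathcal{R}_{\alpha})=1$ rather than merely the set equality $\mathcal{R}_{\alpha,r}=\mathcal{R}_{\alpha}$. One must confirm that the adversary can reliably re-identify all of its $\mathcal{S}_{\alpha}$-known flows in the real view, so that the PIS it reconstructs there is genuinely built from the same addresses; this follows because the fp-QI attributes are identical across all views. One must also ensure equality as \emph{multisets} rather than sets, so that the $Q$-values match with their correct multiplicities. Once this deterministic matching of the real view to the adversary's knowledge is secured, the remaining algebra is immediate.
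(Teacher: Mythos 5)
Your proposal is correct and follows essentially the same route as the paper's own proof: the paper likewise argues that $Pr(\mathcal{R}_{\alpha,r}=\mathcal{R}_{\alpha})=1$ because view $r$ is the prefix-preserving output, and that the upper bound is vacuous since $e^{\epsilon}\geq 1$ for all $\epsilon\geq 0$. Your additional care about the adversary re-identifying its known flows via the unchanged fp-QI attributes and about multiset equality of the PIS is a more explicit justification of a step the paper states in one line, but it is the same argument.
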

\begin{proof}
$Pr(\mathcal{R}_{\alpha,r}=\mathcal{R}_{\alpha})=1$ as view $r$ is the prefix preserving output. Moreover, $\forall \epsilon \geq 0$ we have $e^{\epsilon}\geq 1$.
\end{proof}
From the above, we only need to show
$\mathcal{R}_{\alpha,i}=\mathcal{R}_{\alpha}$ (each generated view $i$
is a real view candidate).

\begin{thm}
\label{lem:lem3}
Scheme~\RN{1} satisfies equation~\ref{eq:epsasl} with $\epsilon=0$.
\end{thm}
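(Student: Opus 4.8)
The plan is to establish the stronger statement that the adversary's prefixes indicator set is \emph{identically} preserved in every view, i.e. $\mathcal{R}_{\alpha,i}=\mathcal{R}_{\alpha}$ for all $i\in\{1,\dots,N\}$. Once this is shown, $Pr(\mathcal{R}_{\alpha,i}=\mathcal{R}_{\alpha})=1=e^{-0}$, so the preceding lemma immediately yields equation~\ref{eq:epsasl} with $\epsilon=0$.

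First I would pin down the exact transformation that produces view $i$. Writing the seed vector as $V_0=-r\cdot V$ and unrolling $\mathcal{L}^*_i=PP(\mathcal{L}^*_{i-1},V,K)$, the associative property of partition-based CryptoPAn (equation~\ref{keyproperty1}) gives $\mathcal{L}^*_i=PP(\mathcal{L}_0,(i-r)\cdot V,K)$. Hence each partition $P_j$ appears in view $i$ as $PP^{(i-r)v_j}(P_j,K)$: a single, fixed power of CryptoPAn applied uniformly inside that partition, with the power depending only on $j$ and $i-r$. The case $i=r$ collapses to the identity, recovering $\mathcal{L}_0$ and hence the prefix-preserving real view, consistent with $Pr(\mathcal{R}_{\alpha,r}=\mathcal{R}_{\alpha})=1$.

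Next I would split the multiset $\mathcal{R}_{\alpha}$ of Definition~\ref{dfn:4.2} into contributions from pairs of known IPs lying in the \emph{same} partition and pairs lying in \emph{different} partitions. The defining feature of Scheme~\RN{1} is that every group of prefix-sharing addresses is placed entirely inside one partition, so each of the $\alpha$ groups known to the adversary sits in a single partition and every nonzero shared-prefix entry of $\mathcal{R}_{\alpha}$ is an intra-partition relation. For the intra-partition pairs I would invoke the prefix-preserving property of iterated CryptoPAn (the remark that $\mathcal{R}(PP^{t}(\mathcal{S}))=\mathcal{R}(\mathcal{S})$, which follows because $PP^{t}$ is itself prefix-preserving): since both addresses of such a pair receive the \emph{same} power $PP^{(i-r)v_j}$, their shared-prefix length $Q(\cdot,\cdot)$ is unchanged in view $i$.

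The delicate part, and the step I expect to be the main obstacle, is the inter-partition pairs. Here the two addresses are anonymized under \emph{different} powers $PP^{(i-r)v_j}$ and $PP^{(i-r)v_{j'}}$, so the generic prefix-preserving argument no longer applies and one cannot simply quote the PIS-invariance remark. Instead I would use the partitioning guarantee that addresses in distinct partitions share no prefix, and argue that applying distinct CryptoPAn powers to disjoint partitions cannot manufacture a new shared prefix between them, so each such pair keeps its value $Q=0$ across every view. Combining the intra- and inter-partition parts shows the two multisets coincide term by term, giving $\mathcal{R}_{\alpha,i}=\mathcal{R}_{\alpha}$ for all $i$ and therefore $\epsilon=0$.
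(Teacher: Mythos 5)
Your proposal is correct and follows essentially the same route as the paper's proof: both reduce the claim to showing $\mathcal{R}_{\alpha,i}=\mathcal{R}_{\alpha}$ for every view and then split the pairs of known addresses into intra-partition pairs (same key index, hence $Q$ preserved by the prefix-preserving property of iterated CryptoPAn) and inter-partition pairs (no shared prefix, hence $Q=0$ in every view). Your treatment is in fact slightly more explicit than the paper's --- you derive the form $PP^{(i-r)v_j}(P_j,K)$ of each partition in view $i$ and flag that the inter-partition case rests on CryptoPAn's collision resistance rather than on the PIS-invariance remark --- but the underlying argument is the same.
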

\begin{proof}
Scheme~\RN{1} divides the trace into $d$ (number of prefix groups)
partitions containing all the records that have similar
prefixes. Hence, for any partition $P_i$ ($1\leq i\leq d$), any two IP
addresses $a$ and $b$ inside $P_i$, and for any $m,n \leq N$, we have
$\mathcal{R}_{m}(a,b)=\mathcal{R}_{n}(a,b)$ because $a$ and $b$ are
always assigned with equal key indices. Moreover, for any two IP
addresses $a$ and $b$ in any two different partitions and any $m,n
\leq N$, we have $\mathcal{R}_{m}(a,b)=\mathcal{R}_{n}(a,b)=0$ since
they do not share any prefixes.
\end{proof}

The above discussions show that scheme~\RN{1} produces perfectly
indistinguishable views ($\epsilon=0$). In fact, it is robust against
the attack explained in Section~\ref{sec:migration sec} and thus does
not required IP migration, because the partitioning algorithm already
prevents addresses with similar prefixes from going into different
partitions (the case in Figure~\ref{fig:migration0}). However,
although adversaries cannot identify the real view, they may choose to
live with this fact, and attack each partition inside any (fake or
real) view instead, using the same semantic attack as shown in
Figure~\ref{tab:motivating-example}. Note that our multi-view
approach is only designed to prevent attacks across different
partitions, and each partition itself is essentially still the output
of CryptoPAn and thus still inherits its weakness.

Fortunately, the multi-view approach gives us more flexibility in
designing specific schemes to further mitigate such a weakness of
CryptoPAn. We next present scheme~\RN{2} which sacrifices some
indistinguishability (in the sense of slightly less real view
candidates) to achieve better protected partitions.

\subsection{Scheme~\RN{2}: Multi-view Using $N$ Key Vectors}
\label{scheme2}

To address the limitation of our first scheme, we propose the next
scheme, which is different in terms of the initial anonymization step,
IP partitioning, and key vectors for view generation. The data owner's and the analyst's actions are summarized
in Algorithms~\ref{alg:data-owner-actions1},~\ref{alg:analyst-actions1}.

\subsubsection{Initial Anonymization with Migration}
First, to mitigate the attack on each partition, we must
relax the requirement that all shared prefixes go into the same
partition. However, as soon as we do so, the attack of identifying the
real view through prefixes shared across partitions, as demonstrated
in Section~\ref{sec:migration sec}, might become possible. Therefore,
we modify the first step of the multi-view approach (initial
anonymization) to enforce the IP migration
technique. Figure~\ref{fig:migration12222} demonstrates this. The
original trace is first anonymized with $K_0$, and then the anonymized
trace goes through the migration process, which replaces the two
different prefixes ($97.17$ and $75.91$) with different iterations
of $PP$, as discussed in Section~\ref{sec:migration sec}.

\begin{figure}[ht]
		\includegraphics[width=1.1\linewidth,viewport= 260 140
                  900 440,clip]{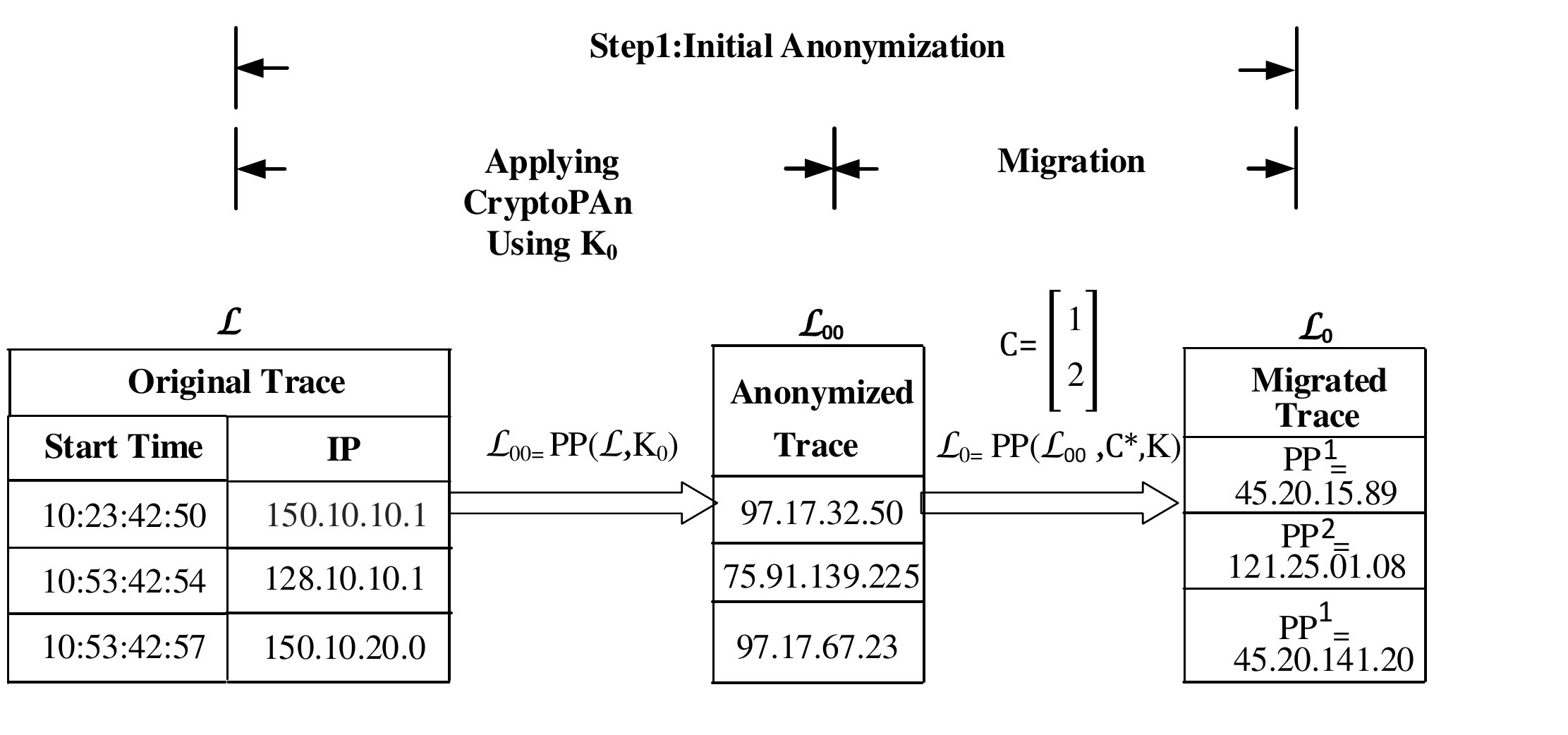}
		\caption{The updated initial anonymization (Step $1$ in  Figure~\ref{fig:2}) for enforcing migration}
		\label{fig:migration12222}
\end{figure}

\subsubsection{Distinct IP Partitioning and $N$ Key Vectors Generation}
For the scheme, we employ a special case of IP partitioning where each
partition includes exactly one distinct IP (i.e.,
the collection of all records containing the same IP). For example,
the trace shown in Figure~\ref{fig:detailed-approach} includes three
distinct IP addresses $150.10.10.1$,$128.10.10.1$, and
$150.10.20.0$. Therefore, the trace is divided into three
partitions. Next, the data owner will generate the seed view as in the
first scheme, although the key $V_0$ will be generated completely
differently, as detailed below.

Let $\mathcal{S}^*=\{S^*_1,S^*_2, \cdots, S^*_d\}$, be the set of IP
addresses after the migration step. Suppose $\mathcal{S}^*$ consists
of $D$ distinct IP addresses. We denote by $\mathcal{C}^*$ the
multiset of totally $D$ migration keys for those distinct IPs (in
contrast, the number of migration keys in $\mathcal{C}$ is equal to
the number of distinct prefixes, as discussed in
Section~\ref{sec:migration sec}). Also, let $PRNG(d,D,i)$ be the set
of $D$ random number generated between $[1,d]$ using a
cryptographically secure pseudo random number generator at iteration
$i^{th}$. The data owner will generate $N+1$ key vector $V_{i}$ as
follows.

 \begin{eqnarray}
 \label{eq:index generator}
& V_{i}= PRNG(d,D,i)-PRNG(d,D,i-1), \\
&\forall i \neq r \in [1,2 \cdots, N] \nonumber
 \end{eqnarray}
 and
 \begin{eqnarray}
 \label{eq:index generator0}
&V_{0}=PRNG(d,D,0)-\mathcal{C}^*\\
&V_{r}=\mathcal{C}^*-PRNG(d,D,r-1)\nonumber
 \end{eqnarray}

\begin{exmp}
In Figure~\ref{fig:detailed-approach121}, the migration and random
vectors are $ \mathcal{C}^*=[1\ 1 \ 2]$, $PRNG(2,3,0)=[1\ 2 \ 2]$,
$PRNG(2,3,1)=[1\ 2 \ 1]$, and $PRNG(2,3,2)=[2\ 2 \ 1]$,
respectively. The corresponding key vectors will be $V_0=[0\ 1 \ 0]$,
$V_1=[0\ 0 \ -1]$ and $V_2=[1\ 0 \ 0]$ where only $V_1$ and $V_2$ are
outsourced.
\end{exmp}

In this scheme, the analyst at each iteration $i$ generates a new set
of IP addresses $\mathcal{S}^*_{i}=\{S^i_1, S^i_2, \cdots, S^i_d\}$ by
randomly grouping all the distinct IP addresses into a set of $d$
prefix groups. In doing so, each new vector $V_i$ essentially cancels
out the effect of the previous vector $V_{i-1}$, and thus introduces a
new set of IP addresses $\mathcal{S}^*_{i}$ consisting of $d$ prefix
groups. Thus, it is straightforward to verify that the $r^{th}$
generated view will prefix preserving (the addresses are migrated back
to their groups using $\mathcal{C}^*$).

\begin{exmp}
  Figure~\ref{fig:detailed-approach121} shows that, in each iteration,
  a different set (but with an equal number of elements) of prefix
  groups will be generated. For example, in the seed view, IP
  addresses $150.10.20.0$ and $128.10.10.1$ are mapped to prefix group
  $11.215$.
\end{exmp}

\begin{figure}[!t]
		\includegraphics[width=1.21\linewidth,  viewport= 220 205 920 440,clip]{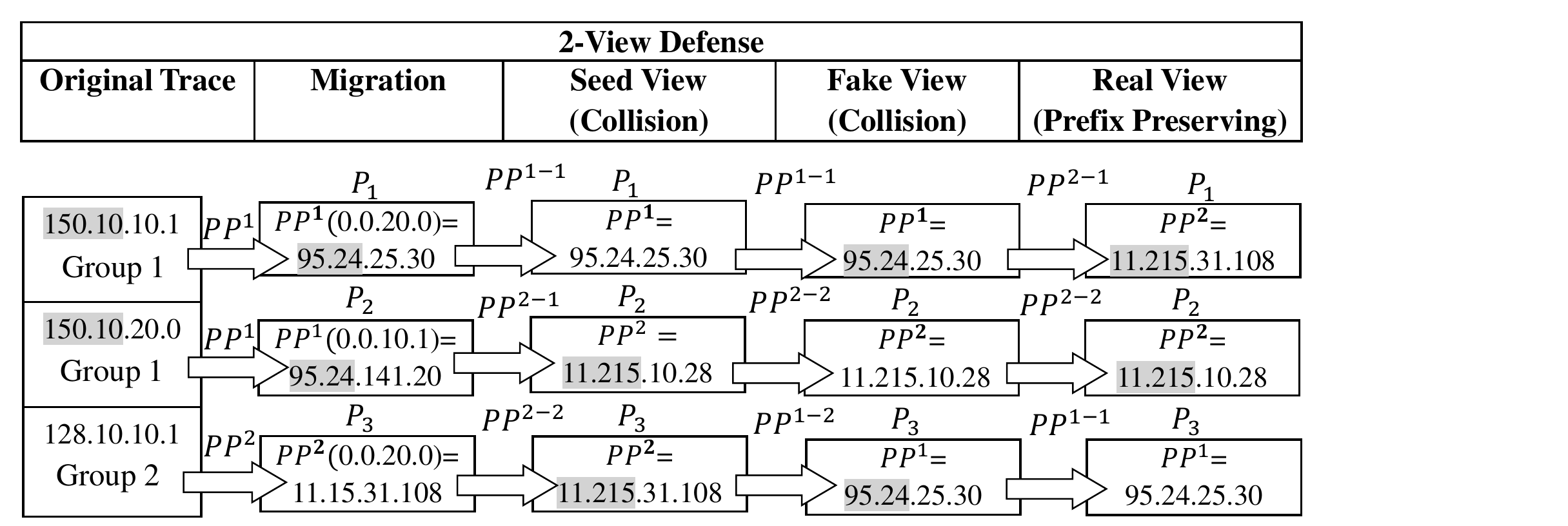}
		\caption{An example of three views generation under scheme~\RN{2}}
		\label{fig:detailed-approach121}
\end{figure}

\subsubsection{Indistinguishability Analysis}
\label{sec:analysis}
By placing each distinct IP in a partition, our second scheme is not
vulnerable to semantic attacks on each partition, since such a
partition contains no information about the prefix relationship among
different addresses. However, compared with scheme~\RN{1}, as we show
in the following, this scheme achieves a weaker level of
indistinguishability (higher $\epsilon$). Specifically, to verify the
indistinguishability of the scheme, we calculate
$Pr(\mathcal{R}_{\alpha}=\mathcal{R}_{\alpha,i})$ for scheme~\RN{2} in
the following. First, the number of all possible outcomes of grouping
$D$ IP addresses into $d$ groups with predefined cardinalities is:
\begin{align}
\hspace{-1cm}  N_{total}=\frac{D!}{|S_1|!|S_2|!\cdots |S_d|!}
\end{align}
where $|S_i|$ denotes the cardinality of group $i$. Also the number of all possible outcomes of grouping $D$ IP addresses into $d$ groups while still having  $\mathcal{R}_{\alpha,i}=\mathcal{R}_{\alpha}$ is:
\begin{align}
N_{\text{real view candidates}}=\frac{\alpha! \ (D-\alpha)! \ \sum^{ \binom{d}{\alpha}}_{i=1}  \big( \Pi^{\alpha}_{i=1}|S_{a_i}| \big)}{|S_1|!|S_2|!\cdots |S_d|!}
\end{align}
for some $a_i \in \{1,2, \cdots, d\}$. This equation gives the number of outcomes when a specific set of $\alpha$ IP addresses ($\mathcal{S}_{\alpha}$) are distributed into $\alpha$ different groups and hence keeping $\mathcal{R}_{\alpha,i}=\mathcal{R}_{\alpha}$ (i.e., the adversary cannot identify collision). Note that term $\sum^{ \binom{d}{\alpha}}_{i=1}  \big( \Pi^{\alpha}_{i=1}|S_{a_i}| \big)$ is all the combinations of choosing this $\alpha$ groups for the numerator to model all the $(|S_{a_i}|-1)!$ combinations. Finally, we have
\begin{align*}
 \forall i \leq N: \ \ Pr(\mathcal{R}_{\alpha,i}=\mathcal{R}_{\alpha})=\frac{N_{\text{real view candidates}}}{N_{total}}=
 \end{align*}
 \begin{align}
 \label{eqn:epsin}
 \mathcal{A}=\frac{\alpha! \ \sum^{ \binom{d}{\alpha}}_{i=1}  \big( \Pi^{\alpha}_{i=1}|S_{a_i}| \big)}{\Pi^{\alpha-1}_{i=0}(D-i)} \geq e^{-\epsilon}
\end{align}
Thus, to ensure the $\epsilon$-indistinguishability, the data owner
needs to satisfy the expression in equation~\ref{eqn:epsin} which is a
relationship between the number of distinct IP addresses, the number of groups, the cardinality of the groups in the trace and the adversary's knowledge.
\begin{thm}
\label{th:5.3}
The indistinguishability parameter $\epsilon$ of the generated views in scheme~\RN{2} is lower-bounded by
\begin{equation}
\label{eqn:optimepsin}
\ln\big[\frac{D^{\alpha}}{ d^{\alpha}} \cdot \Pi^{\alpha-1}_{i=0} \frac{(d-i) }{(D-i)} \big]
\end{equation}
\begin{proof}
Let $b_1, b_2, \cdots, b_n$ be positive real numbers, and for $k = 1, 2, \cdots, n$ define the averages $M_k$ as follows:
\begin{align}
M_k=\frac{\sum \limits_{1\leq i_1\leq i_2 \leq \cdots\leq i_{k} \leq n}b_{i_1}b_{i_1} \cdots b_{i_k}}{\binom{n}{k}}
\end{align}
By Maclaurin's inequality~\cite{mclauren}, which is the following chain of inequalities:
\begin{align}
M_1\geq \sqrt[2]{M_2} \geq \sqrt[3]{M_3} \geq \cdots \geq \sqrt[n]{M_n}
\end{align}
where $M_1= \frac{\sum \limits^{n}_{i=1} b_i}{n}$,
 we have
\begin{equation*}
 \mathcal{A}= \frac{\alpha! \binom{d}{\alpha} M_{\alpha}}{\Pi^{\alpha-1}_{i=0}(D-i)}\leq \frac{\Pi^{\alpha-1}_{i=0} (d-i) (M_1)^{\alpha}}{\Pi^{\alpha-1}_{i=0}(D-i)}
\end{equation*}
and since $M_1= \frac{\sum \limits^{n}_{i=1} |S_i|}{n}=\frac{D}{d}$, we have
\begin{equation*}
A \leq  \frac{D^{\alpha}}{ d^{\alpha}} \cdot \Pi^{\alpha-1}_{i=0} \frac{(d-i) }{(D-i)}
\end{equation*}
\end{proof}
\end{thm}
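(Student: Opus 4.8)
The quantity to control is $\mathcal{A} = Pr(\mathcal{R}_{\alpha,i}=\mathcal{R}_{\alpha})$, which by \eqref{eqn:epsin} must satisfy $\mathcal{A} \geq e^{-\epsilon}$; the smallest admissible $\epsilon$ is therefore governed by $-\ln \mathcal{A}$, so lower-bounding $\epsilon$ is equivalent to upper-bounding $\mathcal{A}$. The plan is thus to produce a clean closed-form upper bound on $\mathcal{A}$ and then pass to logarithms to recover \eqref{eqn:optimepsin}. The only substantive difficulty lies in estimating the numerator $\sum_{i=1}^{\binom{d}{\alpha}}\prod_{j=1}^{\alpha}|S_{a_j}|$, since the denominator $\prod_{i=0}^{\alpha-1}(D-i)$ is already a falling factorial of $D$.

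The key observation I would exploit is that this numerator is precisely the $\alpha$-th elementary symmetric polynomial $e_\alpha$ in the group cardinalities $|S_1|,\dots,|S_d|$: it sums the products of cardinalities over all $\binom{d}{\alpha}$ choices of $\alpha$ distinct groups. Introducing the associated symmetric mean $M_\alpha = e_\alpha / \binom{d}{\alpha}$ lets me recast $\mathcal{A} = \alpha!\binom{d}{\alpha} M_\alpha / \prod_{i=0}^{\alpha-1}(D-i)$, which is exactly the shape to which Maclaurin's inequality applies. I would also record the two elementary identities $\alpha!\binom{d}{\alpha} = \prod_{i=0}^{\alpha-1}(d-i)$ and $M_1 = \tfrac{1}{d}\sum_{i=1}^{d}|S_i| = D/d$, the latter holding because the $d$ prefix groups partition the $D$ distinct IP addresses.

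With this setup, Maclaurin's inequality $M_1 \geq M_2^{1/2} \geq \cdots \geq M_d^{1/d}$ immediately yields $M_\alpha \leq M_1^{\alpha}$. Substituting $M_\alpha \le (D/d)^{\alpha}$ together with $\alpha!\binom{d}{\alpha} = \prod_{i=0}^{\alpha-1}(d-i)$ into the recast $\mathcal{A}$ collapses the sum-over-subsets into the single power $(D/d)^{\alpha}$, giving $\mathcal{A} \leq \frac{D^{\alpha}}{d^{\alpha}} \prod_{i=0}^{\alpha-1}\frac{d-i}{D-i}$. Feeding this estimate back into the constraint $\mathcal{A}\ge e^{-\epsilon}$ of \eqref{eqn:epsin} — equivalently, noting that the extremal $\epsilon$ equals $-\ln\mathcal{A}$ — and taking logarithms produces the bound displayed in \eqref{eqn:optimepsin}.

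The main obstacle, and the single genuinely creative step, is recognizing the opaque numerator as a symmetric polynomial and thereby identifying Maclaurin's inequality as the right hammer; once $\mathcal{A}$ is in the $\alpha!\binom{d}{\alpha}M_\alpha$ form, everything else is routine bookkeeping (rewriting binomials and factorials as falling factorials and inserting $M_1 = D/d$). A secondary point to check is that Maclaurin requires the $|S_i|$ to be positive reals, which holds since every prefix group is nonempty; I would also observe that the bound is tight exactly when all groups are equal-sized ($M_\alpha = M_1^{\alpha}$), which conveniently pins down the worst case for indistinguishability.
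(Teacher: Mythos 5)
Your argument is essentially identical to the paper's proof: both recognize the numerator of $\mathcal{A}$ as the $\alpha$-th elementary symmetric polynomial in the group cardinalities, apply Maclaurin's inequality to obtain $M_\alpha \le M_1^{\alpha}$ with $M_1 = D/d$, and use $\alpha!\binom{d}{\alpha} = \prod_{i=0}^{\alpha-1}(d-i)$ to arrive at the same upper bound on $\mathcal{A}$. Your only additions are making explicit the passage from the upper bound on $\mathcal{A}$ to the bound on $\epsilon$ and noting the equality case for equal-sized groups, both of which the paper leaves implicit.
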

Figure~\ref{fig:detailed-approach1}(a) shows how the lower-bound in
Equation~\ref{eqn:optimepsin} changes with respect to different values
of fraction $d/D$ and also the adversary's knowledge. As it is
expected, stronger adversaries have more power to weaken the scheme
which results in increasing $\epsilon$ or increasing the chance of
identifying the real view. Moreover, as it is illustrated in the
figure, when fraction $d/D$ grows, $\epsilon$ tends to converge to
very small values. Hence, to decrease $\epsilon$, the data owner may
increase $d/D \in [0,1]$ by grouping addresses based on a bigger
number of bits in their prefixes, e.g., a certain combination of 3
octets would be considered as a prefix instead of one or two. Another
solution could be aggregating the original trace with some other
traces for which the cardinalities of each prefix group are small. We
study this effect in our experiments in Section~\ref{sec:experiments}
where we illustrate the concept especially in
Figures~\ref{fig:leakage23}, \ref{fig:time-computation1}.

Finally, Figure~\ref{fig:detailed-approach1}(b) shows how variance of
the cardinalities affects the indistinguishability for a set of fixed
parameters $d$, $D$, $\alpha$. In fact, when the cardinalities of the
prefix groups are close (small $\sigma$), $\mathcal{A}$ grows to meet
the lower-bound in Theorem~\ref{th:5.3}. Hence, from the data owner
perspective, a trace with a lower variance of cardinalities and a
bigger fraction $d/D$ has a better chance of misleading adversaries
who wants to identify the real view.
\begin{figure}[!t]
		\includegraphics[width=1.16\linewidth,  viewport= 10 205 920 620,clip]{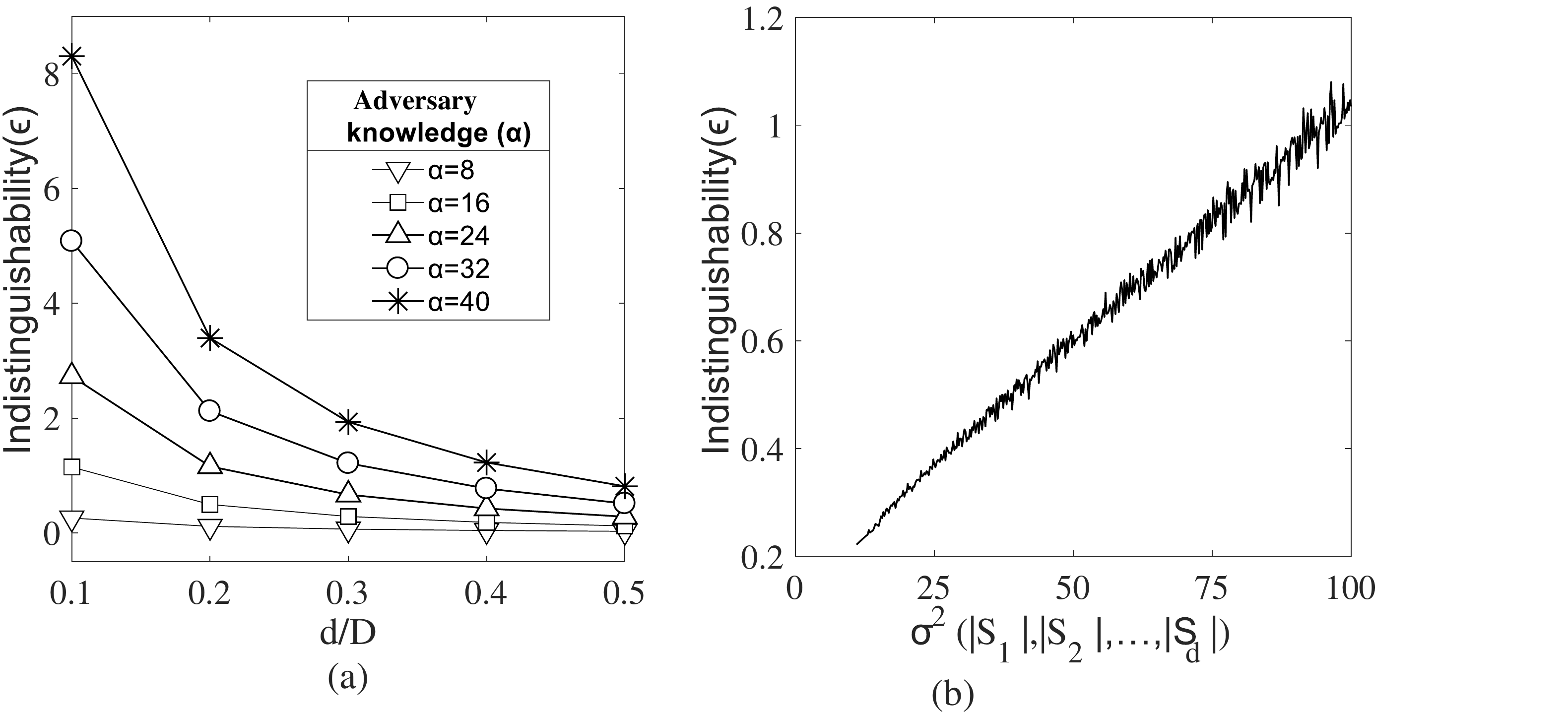}
		\caption{(a) The trend of bound~\ref{eqn:optimepsin} for $\epsilon$ when adversary's knowledge varies. (b) The trend of exact value of $\epsilon$ in equation~\ref{eqn:epsin} for $\alpha=8$, $d/D=0.1$ and when variance of cardinalities varies}
		\label{fig:detailed-approach1}
\end{figure}
\subsubsection{Security of the communication protocol}
\label{sec:secanal2}

We now analyze the security/privacy of our communication protocol in semi-honest model under the theory of secure multiparty computation (SMC)~\cite{Yao86}, \cite{goldrich}.
\begin{lem}
Scheme~\RN{2} only reveals the CryptoPan Key $K$ and the seed trace $\mathcal{L}^*_0$ in semi-honest model.
\end{lem}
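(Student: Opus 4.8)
The plan is to prove the lemma in the standard simulation-based paradigm of secure computation against a semi-honest analyst. First I would pin down exactly what constitutes the analyst's \emph{view} during the protocol: by the outsourcing step of Scheme~\RN{2}, this view is the tuple $(K,\ \mathcal{L}^*_0,\ V_1,\ldots,V_N)$ together with the public parameters $N$, $d$, $D$. Since $K$ and $\mathcal{L}^*_0$ appear verbatim in this view, proving that the protocol reveals \emph{only} these two objects reduces to exhibiting a probabilistic polynomial-time simulator $\mathcal{SIM}$ that, given just $(K,\mathcal{L}^*_0)$ and the public parameters, outputs a transcript computationally indistinguishable from the real view---one in which the key vectors leak neither the real-view index $r$, the migration multiset $\mathcal{C}^*$, nor anything about the original trace $\mathcal{L}$ or the private key $K_0$.

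The simulator I would construct simply copies $K$ and $\mathcal{L}^*_0$, and then re-runs the key-vector generator of Equations~\ref{eq:index generator}--\ref{eq:index generator0} using its \emph{own} freshly sampled index $\tilde r\in[1,N]$, a fresh PRNG seed, and a fresh migration multiset $\tilde{\mathcal{C}}^*$, producing $\tilde V_1,\ldots,\tilde V_N$. The heart of the argument is then a reduction to the security of the cryptographically secure pseudo-random number generator: every outsourced vector $V_i$ is a difference of $PRNG(d,D,\cdot)$ outputs, and at the single index $i=r$ the generator output is replaced by $\mathcal{C}^*$, which is itself PRNG-generated. By a hybrid argument that replaces each PRNG invocation with a truly uniform value, the joint distribution of $(V_1,\ldots,V_N)$ becomes computationally indistinguishable from that of $(\tilde V_1,\ldots,\tilde V_N)$. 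Crucially, since $r$ enters the transcript only through the substitution of a pseudo-random $\mathcal{C}^*$ for a pseudo-random $PRNG$ output at one position, the ``break'' in the telescoping chain is invisible, so the simulated $\tilde r$ need not equal the true $r$.

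The step I expect to be the main obstacle is ruling out a \emph{correlation} between the explicitly revealed seed trace $\mathcal{L}^*_0$ and the outsourced vectors, since $\mathcal{L}^*_0=PP(\mathcal{L}_0,V_0,K)$ with $V_0=PRNG(d,D,0)-\mathcal{C}^*$ shares the term $PRNG(d,D,0)$ with $V_1=PRNG(d,D,1)-PRNG(d,D,0)$; if the analyst could recover $PRNG(d,D,0)$ from $\mathcal{L}^*_0$, the entire chain could in principle be unraveled to locate the break at $r$. I would resolve this by observing that recovering that index from $\mathcal{L}^*_0$ requires inverting $PP$ down to the prefix-stripped addresses produced by migration, which in turn demands knowledge of the removed prefixes---information the semi-honest analyst provably does not hold, as the prefixes were fabricated after the $K_0$ anonymization. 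Hence the shared term is masked inside $\mathcal{L}^*_0$ and may be treated as an independent uniform value in the reduction, so $V_1$ and every subsequent vector remain indistinguishable from the simulator's output. Combining the PRNG hybrid with this masking argument bounds the total distinguishing advantage by a negligible quantity, establishing that the view is simulatable from $(K,\mathcal{L}^*_0)$ and therefore that the protocol reveals exactly $K$ and $\mathcal{L}^*_0$ and nothing more.
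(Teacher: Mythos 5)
Your proposal is correct and follows the same simulation-based paradigm as the paper, but your simulator is constructed quite differently and is, in fact, the more careful of the two. The paper's proof simply argues that $N$ and each entry of $V_1,\ldots,V_N$ can be simulated by drawing \emph{independent} uniform values from their bounded domains ($N$ from $[1,N_0]$, vector entries from $[-d,d]$), and concludes that only $K$ and $\mathcal{L}^*_0$ are revealed. Your simulator instead re-runs the actual key-vector generator of Equations~\ref{eq:index generator}--\ref{eq:index generator0} with fresh randomness and a fresh $\tilde r$, then reduces to PRNG security via a hybrid argument. This buys something real: the true vectors are \emph{not} jointly independent --- they telescope, so that e.g.\ $V_1+\cdots+V_i = PRNG(d,D,i)-PRNG(d,D,0)$ for $i<r$, and partial sums stay bounded in $[-d+1,d-1]$ per entry, which an independent-uniform simulator would not reproduce. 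Your construction preserves exactly these correlations, and you also flag (and handle, via the masking role of $PP$ under $K_0$ and migration) the dependence between $V_1$ and the seed trace through the shared term $PRNG(d,D,0)$ inside $V_0$ --- a point the paper's proof does not address at all. The only soft spot in your argument is the masking step itself, which is asserted at about the same level of informality as the paper's own claim that the leakage of $\mathcal{L}^*_0$ is ``trivial''; a fully rigorous version would need to reduce it to a stated pseudorandomness property of $PP$. On balance your route is a strengthening of the paper's proof rather than a departure from it.
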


\begin{proof}
Recall that our communication protocol only involves one-round communication between two parties (data owner to data analyst). We then only need to examine the data analyst's view (messages received from the protocol), which includes (1) $N$: number of views to be generated, (2) $K$: the outsourced key, (3) $\mathcal{L}^*_0$: the seed trace, and (4) $V_1,V_2,\cdots,V_N$: the key vectors. As we discuss in section~\ref{sec:analysis}, the probability of identifying the real view by the adversary using all provided information (key and vectors) depends on the adversary knowledge and the trace itself which clearly implies that such ``leakage'' is trivial.

Indeed, each of $N$ and $V_1, V_2, \dots, V_N$ can be simulated by
generating a single random number from a uniform random distribution (which proves that they are not leakage in the protocol). Specifically,  the number of generated views $N$ is integer which is bounded by $N_0$, where $N_0$ is the maximum number of views the data owner can afford and all the entries in $V_1,V_2,\cdots ,V_N$ are in $[-d,d]$ where $d$ is the number of groups. First, given integer $0<N\in N^*$, the probability that $N$ is simulated in the domain would be $Pr[Simulator=N]=\frac{1}{N^*}$. Then, $N$ can be simulated in polynomial time (based on the knowledge data analyst already knew, i.e., his/her input and/or output of the protocol). Similarly, all the random entires in $V_1,V_2,\cdots,V_N$ can also be simulated in polynomial time using a similar simulator (only changing the bound). Thus, the protocol only reveals the outsourced key $K$ and the seed trace $\mathcal{L}^*_0$ in semi-honest model.
\end{proof}
Note that outsourcing the $\mathcal{L}_0^*$ and the outsourced key are trivial leakage. The outsourced key can be considered as a public key and leakage of $\mathcal{L}_0^*$ which is considered as the output of the protocol was studied earlier.
Finally, we study the \emph{setup leakage} and show that the adversary cannot exploit outsourced parameters to increase $\epsilon$ (i.e., decrease the number of real view candidates) by building his/her own key vector.
\begin{lem}(proof in Appendix~\ref{prooflem1})
\label{lem:jj}
For an $\mathcal{S}_{\alpha}$ adversary, who wants to obtain the least number of real view candidates, if condition $(2d-2)^D>N$ holds, the best approach is to follow scheme~\RN{2}, (scheme~\RN{2} returns the least number of real view candidates).
\end{lem}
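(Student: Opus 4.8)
The plan is to reduce every adversarial strategy to a choice of a \emph{set} of key vectors applied to the seed trace $\mathcal{L}^*_0$, and then to show that, among all such choices, the honest execution of scheme~\RN{2} (using exactly the $N$ outsourced vectors) yields the fewest real view candidates. First I would fix the adversary's view of the protocol: knowing $K$ and $\mathcal{L}^*_0$ and being able to apply any integer key vector, the adversary can generate any view $PP(\mathcal{L}^*_0, V', K)$, which amounts to regrouping the $D$ distinct IPs into the $d$ prefix groups. A generated view counts as a real view candidate exactly when its prefix indicator set matches $\mathcal{R}_\alpha$, as in equation~\ref{eqn:epsin}. Two structural facts drive the argument: (i) the real view $\mathcal{L}^*_r$ is \emph{already} one of the $N$ outsourced views by construction (Equations~\ref{eq:index generator}--\ref{eq:index generator0}); and (ii) the real view is prefix preserving, hence always satisfies $\mathcal{R}_{\alpha,r}=\mathcal{R}_\alpha$ and is always a candidate.

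Next I would bound the size of the space the adversary can explore. Each entry of a constructible key vector is a difference of two $PRNG$ outputs drawn from $[1,d]$ (cf. Equations~\ref{eq:index generator}--\ref{eq:index generator0}), so it takes at most $2d-2$ distinct nontrivial values; over the $D$ distinct IPs this gives at most $(2d-2)^D$ distinct key vectors, and therefore at most $(2d-2)^D$ distinct achievable groupings. The hypothesis $(2d-2)^D > N$ then guarantees that the $N$ outsourced views form a \emph{strict} subset of all groupings the adversary could ever produce, so that building its own vectors genuinely reaches groupings outside the outsourced set rather than merely re-deriving them.

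With these pieces in place, the core step is a monotonicity-plus-symmetry argument. I would first observe that the candidate count is monotone nondecreasing in the set of views the adversary retains: adding a freshly generated view can only keep the count fixed (if it fails the PIS test) or increase it (if it passes), never decrease it. Because the real view is already contained in the outsourced set, any deviation is of one of two types. If the adversary \emph{drops} outsourced vectors in favor of its own, then, since from the adversary's standpoint every grouping consistent with $\mathcal{R}_\alpha$ is equiprobable to be the real one (the hidden $\mathcal{C}^*$ and index $r$ are chosen uniformly), any such substitution has positive probability of discarding the real view itself and hence cannot reliably reduce the candidate set below the honest one. If instead the adversary \emph{augments} the outsourced set, then by monotonicity the candidate count can only grow. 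Combining the two cases, the honest execution of scheme~\RN{2} attains the minimum number of real view candidates, governed by the per-view probability $\mathcal{A}$ of equation~\ref{eqn:epsin}; this is exactly what the lemma asserts.

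I expect the main obstacle to be the symmetry step used to rule out a \emph{clever} replacement of the $N$ outsourced vectors by a different $N$-subset that simultaneously contains the real view and exhibits strictly fewer PIS-consistent candidates. Making this rigorous requires formalizing the exchangeability of all $\mathcal{R}_\alpha$-consistent groupings under the adversary's posterior, and then invoking $(2d-2)^D > N$ to show that no deterministic $N$-element selection from the strictly larger grouping space can guarantee inclusion of the hidden real view while undercutting the honest candidate count. The counting inequality is precisely what forbids the degenerate regime in which the adversary could enumerate, and thereby fully control, the entire grouping space with only $N$ views.
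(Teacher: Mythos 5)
Your proposal follows essentially the same route as the paper's proof: both reduce the adversary's strategy to constructing key vectors whose entries are differences of values in $[1,d]$, count the resulting $(2d-2)^D$ possibilities, and use the condition $(2d-2)^D>N$ together with the trivial-leakage (simulatability) of the outsourced parameters to conclude that the adversary gains nothing beyond following scheme~\RN{2}. The monotonicity/exchangeability scaffolding you add is left implicit in the paper's (equally terse) argument, and the gap you flag --- rigorously ruling out a clever alternative $N$-subset --- is present in the paper's own proof as well.
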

\subsection{Discussion}
\label{ffaa}
In this section, we discuss various aspects and limitations of our approach.

\begin{enumerate}
\item \textbf{Application to EDB:} We believe the
  multi-view solution may be applicable to other related areas. For
  instance, processing on encrypted databases (EDB) has a rich
  literature including searchable symmetric encryption
  (SSE)~\cite{sse1}, ~\cite{sse2}, fully-homomorphic encryption (FHE)
  \cite{FHE}, oblivious RAMs (ORAM)~\cite{goldrich}, functional
  encryption~\cite{boneh}, and property preserving encryption
  (PPE)~\cite{Bellare},~\cite{Boldyreva}. All these approaches achieve
  different trade-offs between protection (security), utility (query
  expressiveness), and computational
  efficiency~\cite{naveed}. Extending and applying the multi-view
  approach in those areas could lead to interesting future directions.

\item\textbf{Comparing the Two Schemes:} As we discussed in the two schemes,
  scheme~\RN{1} achieves a better indistinguishability but less
  protected partitions in each view. Figure~\ref{fig:a233} compares
  the relative effectiveness of the two schemes on a real trace under
  $40\%$ adversary knowledge. In particular, Figure~\ref{fig:a233}(a)
  ,(b) demonstrate the fact that despite the lower number of real view
  candidates in scheme~\RN{2} compared with scheme~\RN{1} ($30$ vs
  $160$ out of $160$), the end result of the leakage in scheme~\RN{2}
  is much more appealing ($3\%$vs $35\%$). Therefore, our experimental
  section has mainly focused on scheme~\RN{2}.

\item\textbf{Choosing the Number of Views $N$:} The number of views
  $N$ is an important parameter of our approach that determines both
  the privacy and computational overhead. The data owner could choose
  this value based on the level of trust on the analysts and the
  amount of computational overhead that can be afforded. Specifically,
  as it is implied by Equation~\ref{eq:epsasl} and demonstrated by our
  experimental results in section~\ref{sec:experiments}, the number of
  real view candidates is approximately $e^{-\epsilon}\cdot N$. The
  data owner should first estimate the adversary's background
  knowledge $\alpha$ (number of prefixes known to the adversary) and
  then calculate $\epsilon$ either using Equation~\ref{eqn:epsin} or
  (approximately) using Equation~\ref{eqn:optimepsin}. As it is
  demonstrated in Figures~\ref{fig:detailed-approach1}(a)
  and~\ref{figepsasl4}(b), a bigger $\alpha$ results in weaker
  indistinguishability and demands a larger number of views to be
  generated. An alternative solution is to increase the number of
  prefix groups ($D$) by sacrificing some prefix relations among IPs,
  e.g., grouping them based on first $3$ octets.

\item\textbf{Utility:} The main advantage of the multi-view approach
  is it can preserve the data utility while protecting privacy. In
  particular, we have shown that the data owner can receive an analysis
  report based on the real view ($\Gamma_r$) which is
  prefix-preserving over the entire trace. This is more accurate than
  the obfuscated (through bucketization and suppression) or perturbed
  (through adding noise and aggregation) approaches.  Specifically, in
  case of a security breach, the data owner can easily compute
  $\mathcal{L}_{r}$ (migration output) to find the mapped IP addresses
  corresponding to each original address. Then the data owner applies
  necessary security policies to the IP addresses that are reported
  violating some policies in $\Gamma_r$. A limitation of our work is
  it only preserve the prefix of IPs, and a potential future direction
  is to apply our approach to other property-preserving encryption
  methods such that other properties may be preserved similarly.

\item\textbf{Communicational/Computational Cost:} One of our
  contributions in this paper is to minimize the communication
  overhead by only outsourcing one (seed) view and some supplementary
  parameters. This is especially critical for large scale network data
  like network traces from the major ISPs. On the other hand, one of
  the key challenges to the multi-view approach is that it requires
  $N$ times computation for both generating the views and
  analysis.

  Our experiments in Figure~\ref{fig:time-computation1}
  shows that generating $160$ views for a trace of $1milion$ packets
  takes approximately $4$ minutes and we describe analytic complexity
  results in Tables~\ref{relatedss1} and ~\ref{relatedss2}. We note
  that the practicality of $N$ times computation will mainly depends
  on the type of analysis, and certainly may become impractical for
  some analyses under large $N$. How to enable analysts to more
  efficiently conduct analysis tasks based on multiple views through
  techniques like caching is an interesting future direction. Another
  direction is to devise more accurate measures for the data owner to
  more precisely determine the number of views required to reach a
  certain level of privacy requirement.
 \end{enumerate}

\section{Experiments}

This section evaluates our multi-view scheme through experiments with
real data.

\label{sec:experiments}
\subsection{Setup}
To validate our multi-view anonymization approach, we use a set of
network traces collected by a real ISP. We focus on attributes
$Timestamp$, $IP address$, and $Packet Size$ in our experiments, and
the meta-data are summarized in the table in Figure~\ref{tab:data-specs}(a).
\begin{figure}[ht]
	\begin{center}
		\includegraphics[width=1.1\textwidth, viewport= 130 130  1620 430, clip]{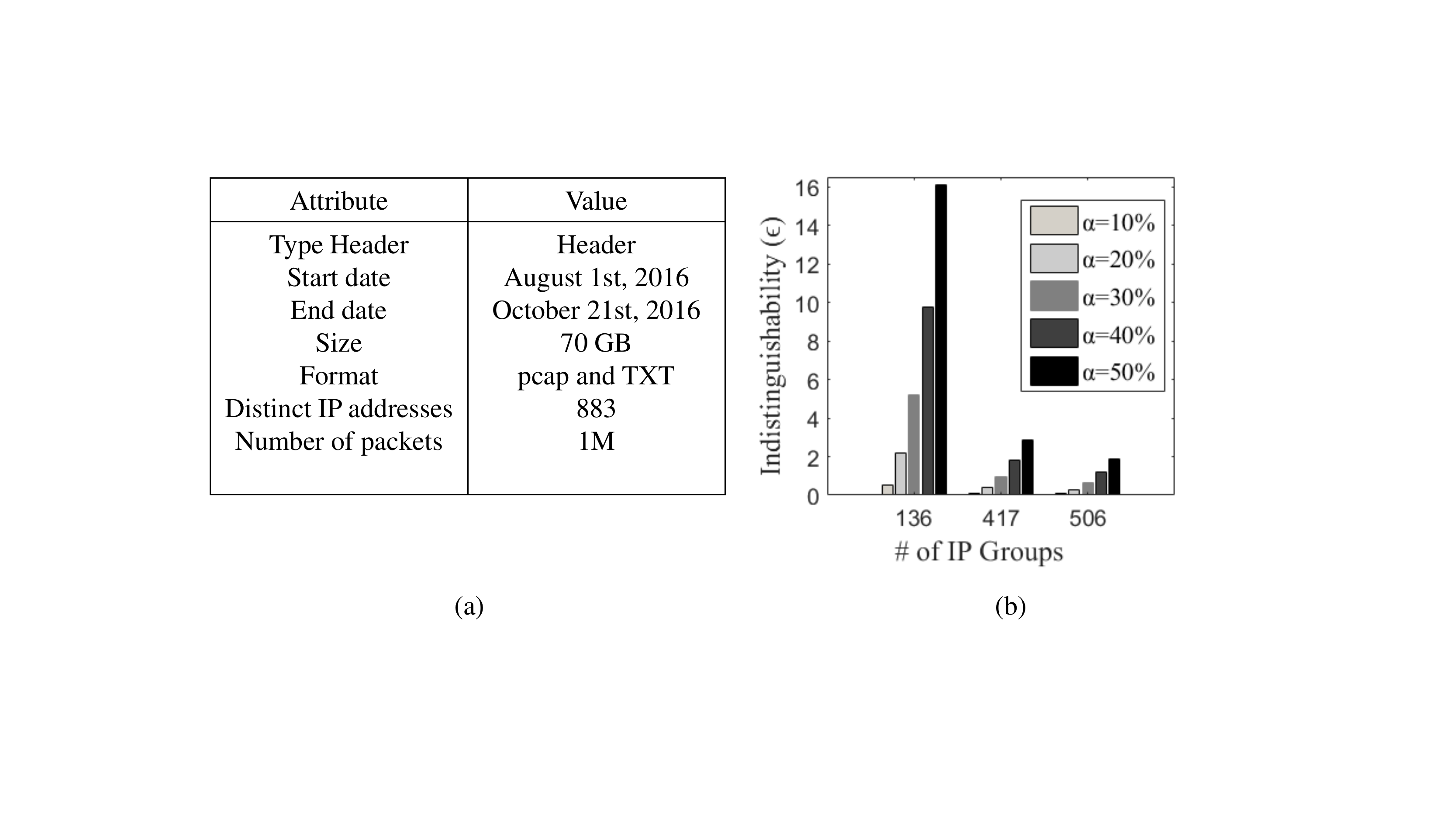}
		\caption{(a) Metadata of the collected traces (b) $\epsilon$ for different number of prefix groups and
different adversary knowledges}
		\label{tab:data-specs}
        \label{figepsasl4}
	\end{center}
\end{figure}
In order to measure the security of the proposed approach, we
implement the frequency analysis attack~\cite{naveed},~\cite{brekene1}. This attack can compromise individual
addresses protected by existing prefix-preserving anonymization in
multi-linear time~\cite{brekene1}. We stress that in the setting of EDBs (encrypted database systems), an attack is successful if it recovers even partial information about a single cell of the DB~\cite{naveed}. Accordingly, we define the information leakage metric to evaluate the effectiveness of our solution against the adversary's semantic attacks. Several measures have been proposed in literature~\cite{PP,Ribeiro} to evaluate the impact of semantic attacks. Motivated by~\cite{PP}, we model the information leakage (number of matches) as the number of records/packets, their original IP addresses are known by the adversary either fully or partially. More formally,\\
\textbf{\textit{\textbf{Information leakage metric}}}~\cite{PP}:
\label{def:infoleak}
We measure $F_{i}$ defined as the total number of addresses that has at least $i$ most significant bits known, where $i \in \{1,2,\cdots,32\}$.

To model adversarial knowledge, we
define a set of prefixes to be known by the adversary ranging from
$10\%$ up to $100\%$ of all the prefixes in the trace. This knowledge
is stored in a two dimensional vector that includes $\alpha$ different
addresses and their key indexes.
\begin{figure*}[ht]
	\begin{center}
		\includegraphics[width=0.9\textwidth, viewport= -5 3  1100 660, clip]{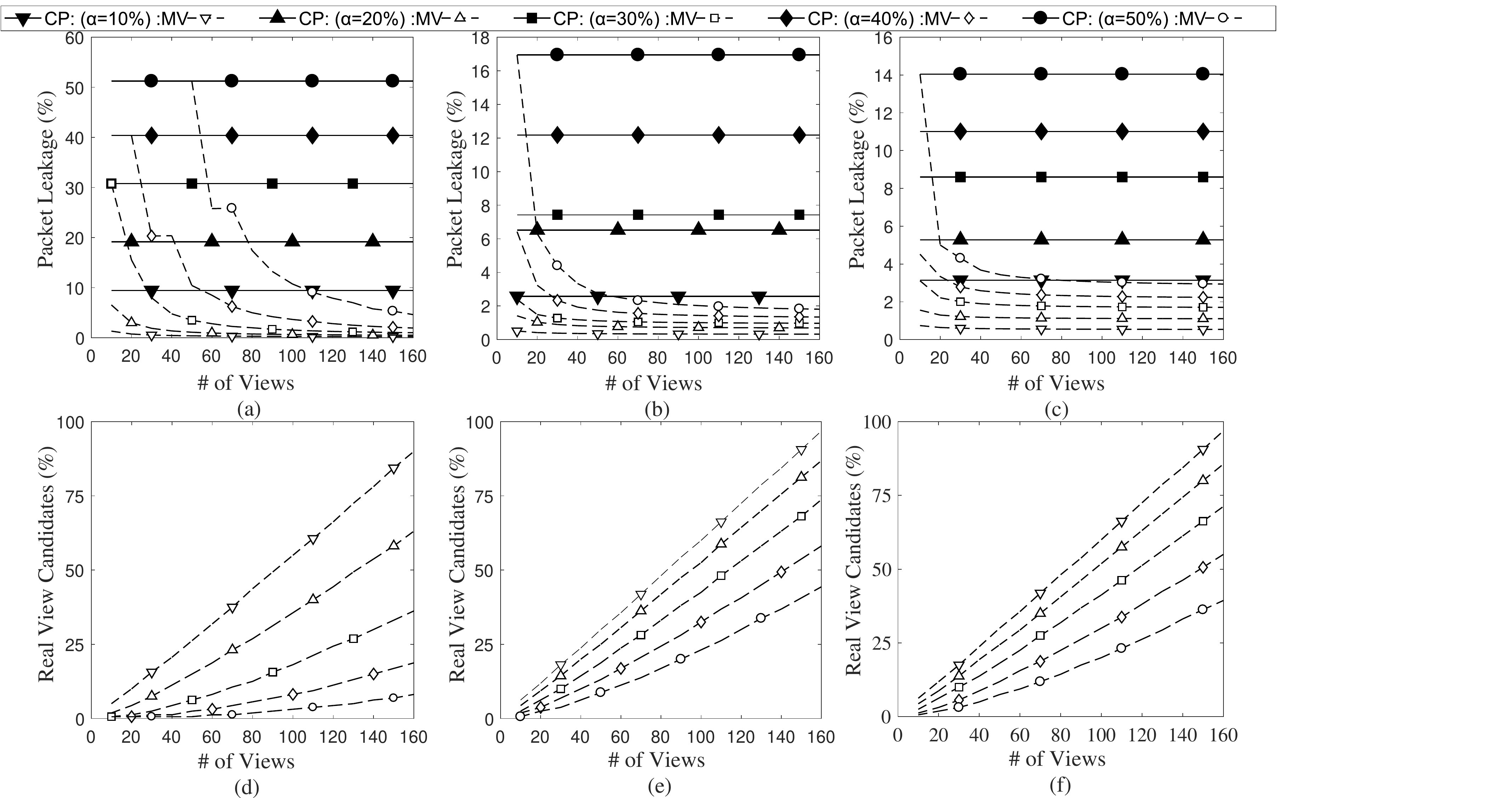}
		\caption{Percentage of the compromised packets (out of 1M) and number of real view candidates when number of views and the adversary knowledge vary and for the three different cases (1) Figures (a),(d) (2) Figures (b),(e) (3) Figures (c),(f) where legends marked by \emph{CP} denote the CryptoPAn result whereas those marked by \emph{MV} denote the multi-view results}
		\label{fig:leakage23}
	\end{center}
\end{figure*}
\begin{figure}[ht]
		\includegraphics[width=0.52\textwidth, viewport= -80 390  430 640, clip]{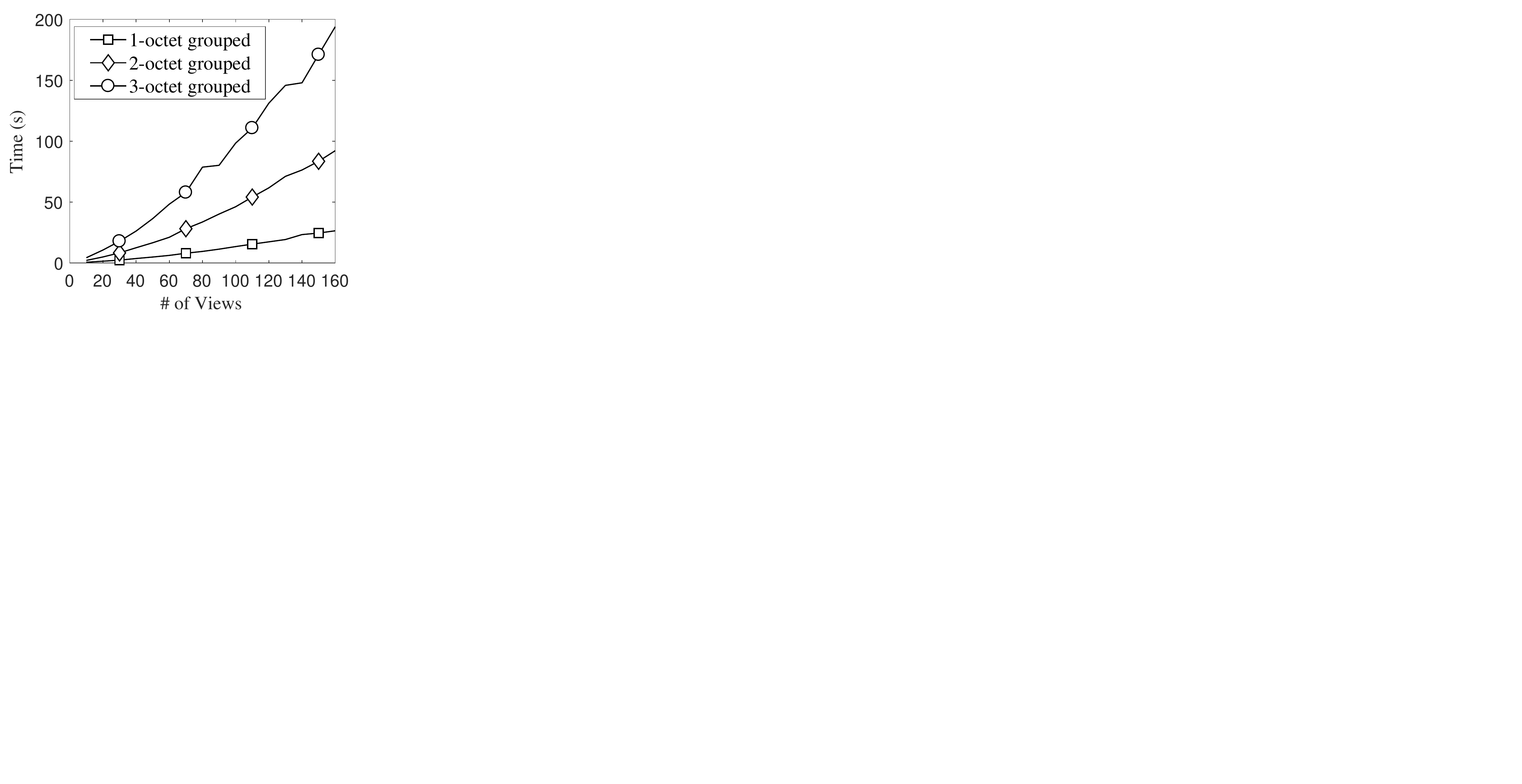}
		\caption{Computation time obtained by our anonymization approach for different prefix grouping cases}
		\label{fig:time-computation1}
\end{figure}
Next, using our multi-view scheme, we generate all the $N$ views.
However, before we apply the frequency analysis attack, we simulate
how an adversary may eliminate some fake views from further
consideration as follows. For each view, we check if two addresses
from the adversary's knowledge set with different prefixes now share
prefixes in that view.  If we find such a match in the key indices,
the corresponding view will be discarded from the set of the real view
candidates and will not be considered in our experiments since the
adversary would know it is a fake view.

We validate the effectiveness of our scheme by showing the number of
real view candidates and the percentage of the packets in the trace
that are compromised (i.e., the percentage of IP packets whose
addresses have at least eight most significant bits known). Each
experiment is repeated more than $1,000$ times and the end results are
the average results of the frequency analysis algorithm applied to
each of the real view candidates.

Moreover, evaluating the \emph{utility} preservation and studying the \emph{scalability} of using \emph{ORAM} in our scheme are respectively discussed in Appendix~\ref{app:over1} and ~\ref{app:over2}.

We conduct all experiments on a
machine running Windows with an Intel(R) Core(TM) i7-6700 3.40 GHz CPU, 4
GB Memory, and 500 GB storage.

\subsection{Results}
\subsubsection{Information Leakage Analysis}
First, the numerical results of the indistinguishability parameter
$\epsilon$ under different adversary's knowledges are depicted in
Figure~\ref{figepsasl4}(b). Those results correspond to three different
cases, i.e., when addresses are grouped based on (1) only the first
octet ($136$ groups), (2) the first and the second octets ($417$
groups), and (3) the first three octets ($506$ groups).  As we can see
from the results, $\epsilon$ decreases (meaning more privacy) as the
number of prefix groups increases, and it increases as the amount of
adversarial knowledge increases.

We next validate those numerical results through experiments in
Figure~\ref{fig:leakage23}. Specifically, we first analyze the
behavior of our second multi-view scheme (introduced in
Section~\ref{scheme2}) before comparing the two schemes in
Appendix~\ref{expappen}.  Figure~\ref{fig:leakage23} presents
different facets of information leakage  when our approach is
applied in various grouping cases. The results in
Figure~\ref{fig:leakage23} are for adversaries who has knowledge of no
more than most $50\%$ of the prefix groups (Figure~\ref{fig:leakage3}
in Appendix~\ref{app:over} presents the more extreme cases for the
same experiments, i.e., up to $100\%$ knowledge). The analysis of
these figures is detailed in the following.

\vspace{0.05in}

\noindent\textbf{Effect of the number of prefix groups:} As we discuss earlier,
three different IP grouping cases are
studied. Figures~\ref{fig:leakage23} (a) and (d) shows respectively
the results of packet leakage and number of real view candidates when
$d=136$. As the numerical results in
Figure~\ref{fig:detailed-approach1} anticipates, because the fraction
$d/D=0.154$ is relatively low, the indistinguishability of generated
views diminishes specially for stronger adversary
knowledges. Consequently, the adversary discards more views and the
rate of leakage increases, compared with Figures~\ref{fig:leakage23}
(b), (e) and Figures~\ref{fig:leakage23} (c), (f) for which the
fraction $d/D$ are $0.47$ and $0.57$, respectively. In particular, for
the worst case of $50\%$ adversary knowledge and when the number of
views is less than $50$, we can verify that the number of real view
candidates for case (1) remains $1$ resulting in packet leakage
comparable to that of CryptoPAn.

\vspace{0.05in}

\noindent\textbf{Effect of the number of
  views:} As it is illustrated in the figure, increasing the number of
views always improves both the number of real view candidates and the
packet leakages. All the figures for real view candidates evaluation,
show a near linear improvement where the slope of this improvement
inversely depends on the adversary's knowledge. For the packet
leakages, we can note that the improvement converges to a small packet
leakage rate under a large number of views. This is reasonable, as
each packet leakage result is an average of leakages in all the real
view candidates. However, since each of the fake views leaks a certain
amount of information, increasing the number of views beyond a certain
value will no longer affect the end result. In other words, the packet
leakage converges to the average of leakages in the (fake) real view
candidates. Finally, the results show that our proposed scheme can
more efficiently improve privacy by (1) increasing the fraction $d/D$
(\textit{number of views/number of distinct addresses}) or (2)
increasing the number of views. The first option may affect utility
(since inter-group prefix relations will be removed), while the second
option is more aligned with our objective of trading off privacy with
computation.

\subsubsection{Computational Overhead Evaluation}
\label{computcost}
We evaluate the computational overhead incurred by our
approach. Figure~\ref{fig:time-computation1} shows the time required
by our scheme in each grouping cases, when the number of views varies
for a trace including one million packets. We observe that, when the
number of views increases, the computational overhead increases near
linearly. However, each case shows a different slope depending on the
number of groups. This is reasonable as our second scheme generates
key vectors with a larger number of elements for more groups, which
leads to applying CryptoPAn for more iterations (see complexity
analysis in Appendix~\ref{complexityanalysis}). Finally, linking this
figure to the information leakage results shown in
Figure~\ref{fig:leakage23} demonstrates the trade-off between privacy
and computational overhead.

\section{Related Work}
\label{sec:related}
In the context of anonymization of network traces, as surveyed
in~\cite{Mivule}, many solutions have been
proposed~\cite{flaim,ref6,brekene1,paxon1,riboni}.  Generally, these
may be classified into different categories, such as
\textit{enumeration}~\cite{Farah}, \textit{partitioning}~\cite{Slagell}, and
\textit{prefix-preserving} ~\cite{ Xu,Gattani}. These methods include removing rows or attributes,
suppression, and generalization of rows or
attributes~\cite{challengeof}. Some of the solutions~\cite{Ribeiro,paxon1} are designed to address specific attacks and are generally based on the permutation of some fields in the network trace to blur the adversary's knowledge. Later studies
either prove theoretically~\cite{brekene2} or validate empirically~\cite{burkhart} that those works may be defeated by semantic attacks.

As our proposed anonymization solution
fall into the category of prefix-preserving solutions, which aims to
improve the utility, we review in more details some of the proposed
solutions in this category. First effort to find a prefix preserving anonymization was done by Greg Minshall~\cite{greg} who developed TCPdpriv which is a table-based approach that generates a function randomly. Fan et al.~\cite{PP} then developed CryptoPAn with a completely cryptographic approach. Several publications~\cite{brekene1}, ~\cite{Ribeiro,paxon1} have then raised the vulnerability of this scheme against semantic attacks which motivated query based~\cite{mcsherry} and bucketization based~\cite{riboni} solutions. In the following we review those works in more details.

Among the works that address such semantic attacks, Riboni et
al.~\cite{riboni} propose a (k,j)-obfuscation methodology applied to
network traces. In this method, a flow is considered obfuscated if it
cannot be linked, with greater assurance, to its (source and
destination) IPs. First, network flows are divided into either
confidential IP attributes or other fields that can be used to attack.
Then, groups of $k$ flows having similar fingerprints are first
created, then bucketed, based on their fingerprints into groups of
size $j<k$.  However, utility remains a challenge in this solution, as
the network flows are heavily sanitized, i.e., each flows is blurred
inside a bucket of $k$ flows having similar fingerprints. An
alternative to the aforementioned solutions, called \textit{mediated
  trace analysis}~\cite{mediated1,mediated2}, consists in performing
the data analysis on the data-owner side and outsourcing analysis
reports to researchers requesting the analysis. In this case, data can
only be analyzed where it is originally stored, which may not always
be practical, and the outsourced report still needs to be sanitized
prior to its outsourcing~\cite{mcsherry}. In contrast to those
existing solutions, our approach improves both the privacy and utility
at the cost of a higher computational overhead. Table~\ref{relatedss}, summarizes the most important network trace anonymization schemes, over past twenty
years\cite{Mivule} and their main characteristics.
\begin{table}[!t]
	\caption{Summary of proposed network traces anonymization in literature}
	\label{relatedss}
	\centering
	\begin{adjustbox}{max width=3.3in}
    \LARGE
		\begin{tabular}{c|c|c }
			\hline
			Authors & Privacy against semantic attacks & Utility \\
			\hline
			Slagell et al. \cite{slagell} & Violated & Prefix preserving\\
			\hline
McSherry et al. \cite{mcsherry} & Preserved& Noisy aggregated results\\
			\hline
Pang et al. \cite{paxon1} & Violate & Partial prefix preserving\\
\hline
Riboni et al. \cite{riboni} &Preserved & Heavily sanitized\\
\hline
Ribeiro et al. \cite{Ribeiro} &Violated & Partial prefix preserving\\
\hline
Mogul et al. \cite{mediated1} &Violated & Aggregated results\\
\hline
\end{tabular}
	\end{adjustbox}
\end{table}

The last step of our solution requires data owner to privately
retrieve an audit report of the real view, which can be based on
existing private information retrieval (PIR) techniques.  A PIR
approach usually aims conceal the objective of all queries independent
of all previous queries~\cite{orpir,pir2}. Since the sequence of
accesses is not hidden by PIR while each individual access is hidden,
the amortized cost is equal to the worst-case cost~\cite{orpir}. Since
the server computes over the entire database for each individual
query, it often results in impracticality for large databases. On the
other hand, ORAM~\cite{oram1} has verifiably low amortized
communication complexity and does not require much computation on the
server but rather periodically requires the client to download and
reshuffle the data~\cite{orpir}.  For our multi-view scheme, we choose
ORAM as it is relatively more efficient and secure, and also the
client (data owner in our case) has sufficient computational power and
storage needed to locally store a small number of blocks (audit
reports in our case) in a local stash.
\section{Conclusion}
\label{sec:conclusion}
In this paper, we have proposed a multi-view anonymization approach
mitigating the semantic attacks on CryptoPAn while preserving the
utility of the trace. This novel approach shifted the trade-off from
between privacy and utility to between privacy and computational cost;
the later has seen significant decrease with the advance of technology,
making our approach a more preferable solution for applications that
demand both privacy and utility. Our experimental results showed that
our proposed approach significantly reduced the information leakage
compared to CryptoPAn. For example, for the extreme case of adversary
pre-knowledge of 100\%, the information leakage of CryptoPAN was 100\%
while under approach it was still less than 10\%. Besides addressing
various limitations discussed in Appendix~\ref{ffaa}, our future works will
adapt the idea to improve existing privacy-preserving solutions in
other areas, e.g., we will extend our work to the multi-party problem
where several data owners are willing to share their traces to
mitigate coordinated network reconnaissance by means of distributed
(or inter-domain) audit~\cite{sepia}.
\section{ACKNOWLEDGMENTS}
The authors thank the anonymous reviewers, for their valuable comments. We appreciate
Momen Oqaily’s support in the implementation. This work is partially
supported by the Natural Sciences and Engineering Research
Council of Canada and Ericsson Canada under CRD Grant
N01823. The research of Yuan Hong is partially supported by the National Science Foundation
under Grant No. CNS-1745894 and the WISER ISFG grant.


\appendix
\section{Proofs}
\subsection{Proof of Theorem~\ref{thm:reverese-prefic-preserving}}
\label{subsec:proof1}
\begin{proof}\ref{thm:reverese-prefic-preserving}
We must show that $c=a$. To do so, we use induction:
\begin{align*}
\begin{split}
&c_{1}  = b_{1} \oplus f_{0}(\lambda) , \  \  b_{1} =  a_{1} \oplus f_{0}(\lambda) \\
&\Rightarrow c_{1} = a_{1} \oplus f_{0}(\lambda)\oplus f_{0}(\lambda) = a_{1}
\end{split}
\end{align*}
where $\lambda$ is empty string and $f_{0}(\lambda)$ is a constant bit in $\{0,1\}$ that depends only on padding function and cryptographic key $K$. Assume $\forall j < k;  \ \ c_{j} = a_{j}$, thus:
\begin{align*}
&c_{k} = b_{k} \oplus f_{k-1}(c_1 \cdots c_{k-1}) \\
& \text{and:} \  \forall j < k;  \ \ \ \ c_{j} = a_{j}, \   b_{k} =  a_{k} \oplus f_{k-1}(a_1 \cdots a_{k-1}) \\
&\Rightarrow c_{k} = a_{k} \oplus f_{k-1}(a_1 \cdots a_{k-1}) \oplus f_{k-1}(a_1 \cdots a_{k-1})= a_{k}
\end{align*}
\end{proof}
\subsection{Proof of Lemma~\ref{lem:jj}}
\label{prooflem1}
\begin{proof}
Suppose there exists algorithm $\mathsf{A}$ which returns the smallest set of key vectors $V_{s}$ to reverse the seed trace and obtain the minimum number of real view candidates, given our setup. Also denote by $V$ the set of those key vectors if the adversary follows scheme~\RN{2}. We now show that if $(2d-2)^D>N$ holds then we have $V=V_{s}$. First, note that the key indices of different distinct addresses in $\mathcal{L}^*_0$ is $PRNG(d,D,0)$. Therefore, the adversary has to guess $V=\mathcal{C}^*-PRNG(d,D,0)$. However, note that elements of $V$ are in $[-d+1,d-1]$ and there will be $(2d-2)^D$ different combinations for $V$. Thus to minimize this number, the adversary has to use the outsourced parameters which means we have $\mathsf{A}(\mathcal{L}^*_0,K,V_1,\cdots,V_N,\mathcal{S}_{\alpha})=V_{s}$. However, we showed earlier that all these inputs are trivial leakage. Therefore, if $(2d-2)^D>N$  holds, we have $V= V_{s}$.
\end{proof}
\section{Experiments}
\label{expappen}
In this section, using experiments, we measure the security of the proposed approach against very strong adversaries. In addition, we evaluate the utility of the approach using two real network analyses. Finally, we justify the choice of ORAM in our setup using a comprehensive study on the scalability of ORAM in the literature.
\subsection{Privacy Evaluation against Very Strong Adversaries}
Figure~\ref{fig:leakage3} shows the leakage and the real view candidates results for stronger adversaries ($\alpha \in [60,100]$). Note that in this figures, we only show results for case (2) and (3) as results in case (1) does not show a significant improvement compared with CryptoPAn results because the multi-view approach with fraction of $d/D=0.154$ cannot defeat the adversary's knowledge ($\epsilon>16$).
\label{app:over}
\subsection{Utility Evaluation Using Real-life Network Analytics}
\label{app:over1}
Figure~\ref{fig:leakage4} shows the results of two different network analytics over the original trace (1M records), the real view  and one of the fake views generated in our multi-view solution. In the first experiment, we present IP distribution~\cite{IPdist} in the trace; reporting the number of distinct addresses within each subnet (IP group). We compare the distribution of distinct IP addresses inside the aforementioned three traces for both ~\emph{temporal} distribution; if subnets are indexed based on their time stamps; and ~\emph{cardinality-based} distribution result; if subnets are indexed based on their cardinalities. We found that our results (both distributions) generated from the original trace and the real view are identical (see Figure~\ref{fig:leakage4}(a)). This is reasonable because the real view is a prefix preserving mapping of IPs that keeps the fp-QI attributes intact (preserving both distributions). Moreover, the cardinality based distribution result generated from the fake view is identical to those in the original trace and the real view (see Figure~\ref{fig:leakage4}(c)). Note that the later is resulted from the indistinguishability of our multi-view solution.

In the second experiment, we present a packet-level analytic~\cite{mcsherry}. In particular, Figure~\ref{fig:leakage4}(d,e) shows the~\emph{empirical cumulative distribution function} results for the three traces. Our results clearly show that the original trace and our scheme results are identical as multi-view will not have any impact on fingerprinting quasi identifier attributes.
\begin{figure*}[ht]
	\begin{center}
		\includegraphics[width=0.6\textwidth, viewport= 5 92 900 640, clip]{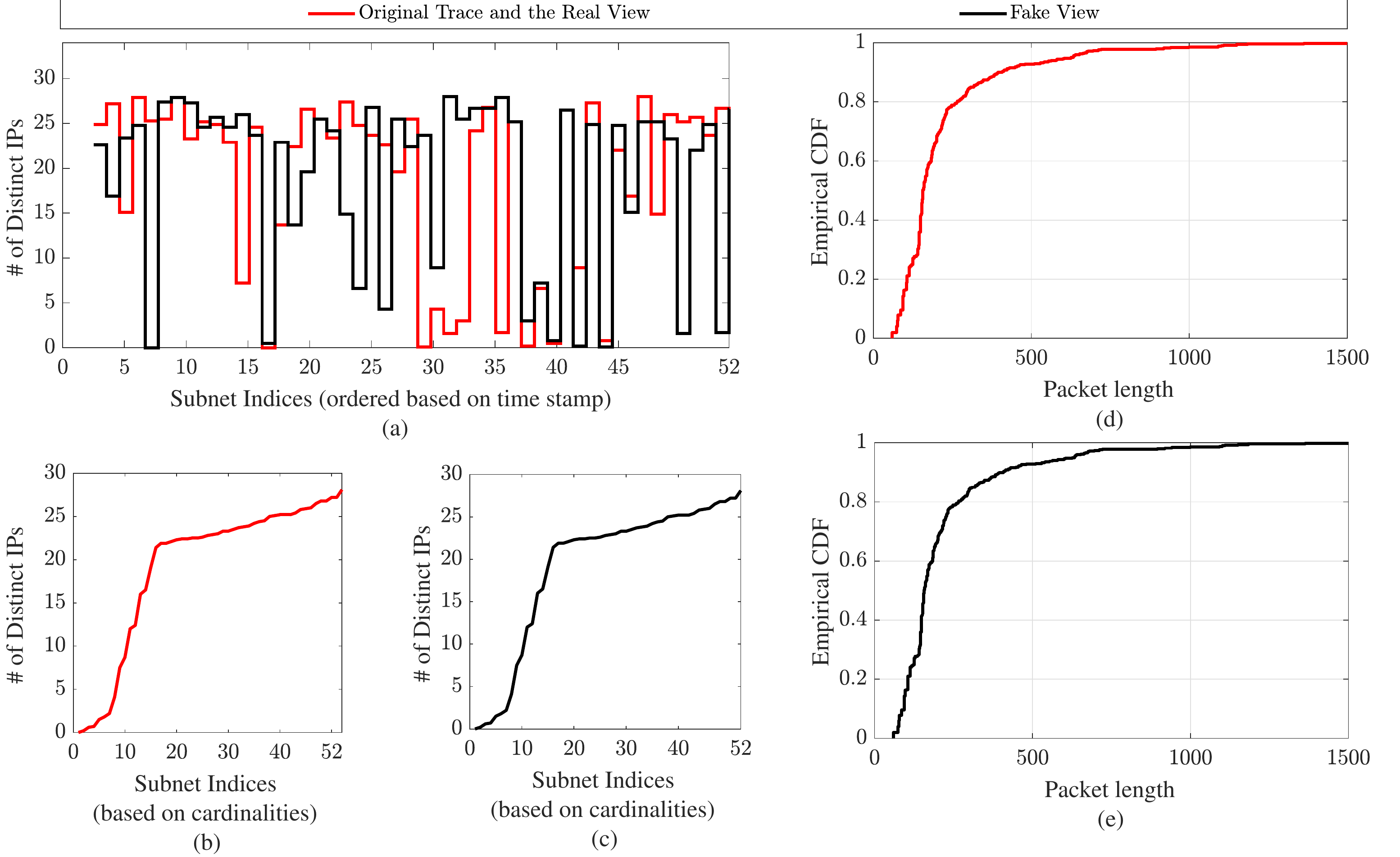}
		\caption{Distribution of distinct IP addresses in different subnets (IP groups) (out of 1M) (a) for the original trace, the real view and one of the fake views based on the order they appear in the trace (temporal distribution), (b) for the original trace and the real view and (c) for the fake view, based on the cardinalities of the subnets in an ascending order (cardinality-based distribution). Empirical CDF for the packet lengths in (e),(f) the original trace and the real view, and the fake view, respectively.}
		\label{fig:leakage4}
	\end{center}
\end{figure*}
\begin{figure}[ht]
	\begin{center}
		\includegraphics[width=0.83\textwidth, viewport= 65 5 1360 614, clip]{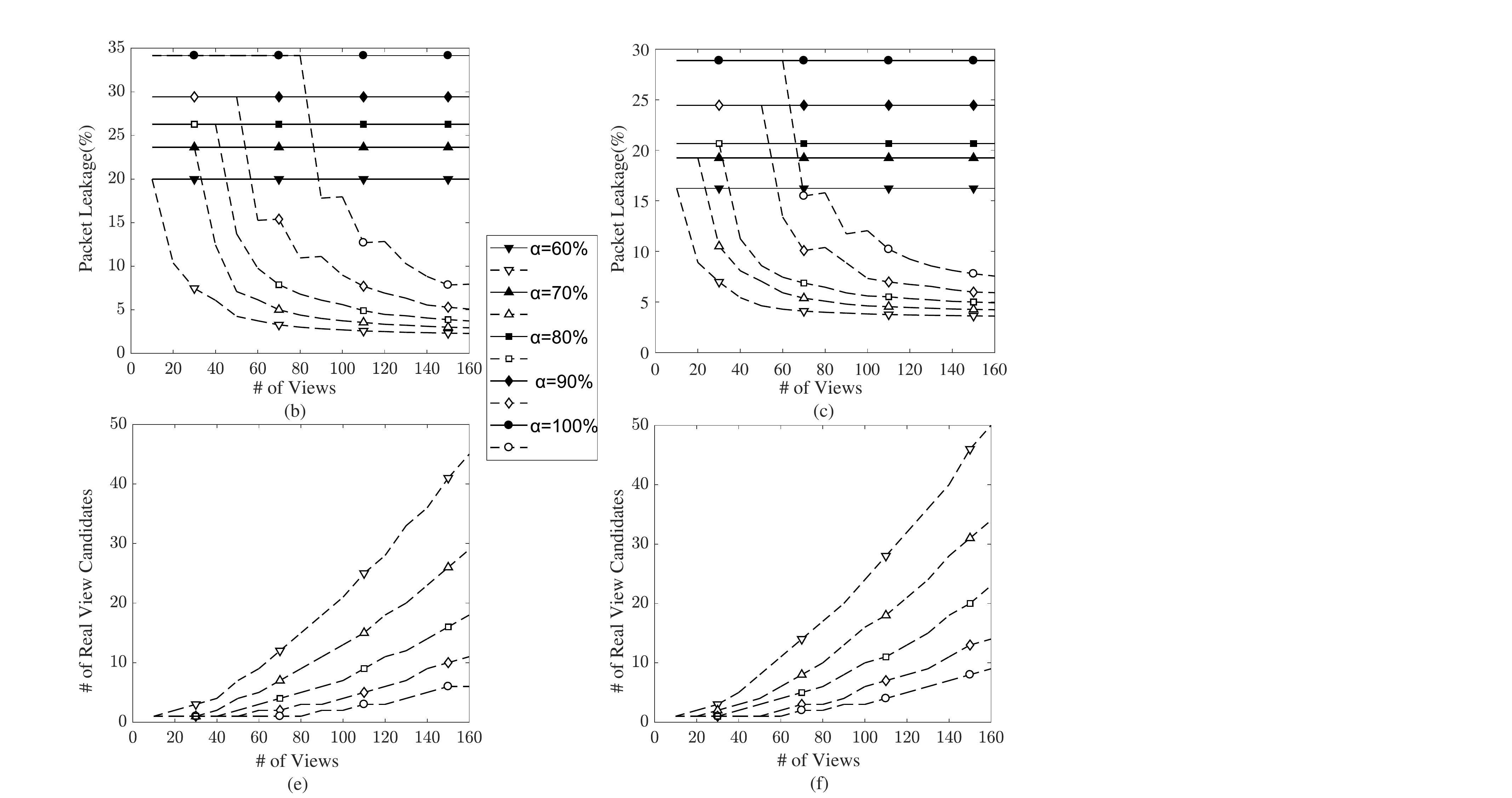}
		\caption{Percentage of the compromised packets (out of 1M) and number of real view candidates when number of views and the adversary knowledge vary and for  case (1) Figures (b),(e) (2) Figures (c),(f) where legends marked by \emph{CP} denote the CryptoPAn result whereas those marked by \emph{MV} denote the multi-view results}
		\label{fig:leakage3}
	\end{center}
\end{figure}
\begin{figure}[ht]
	\begin{center}
		\includegraphics[width=0.97\textwidth,  viewport= -90 350 1360 650, clip]{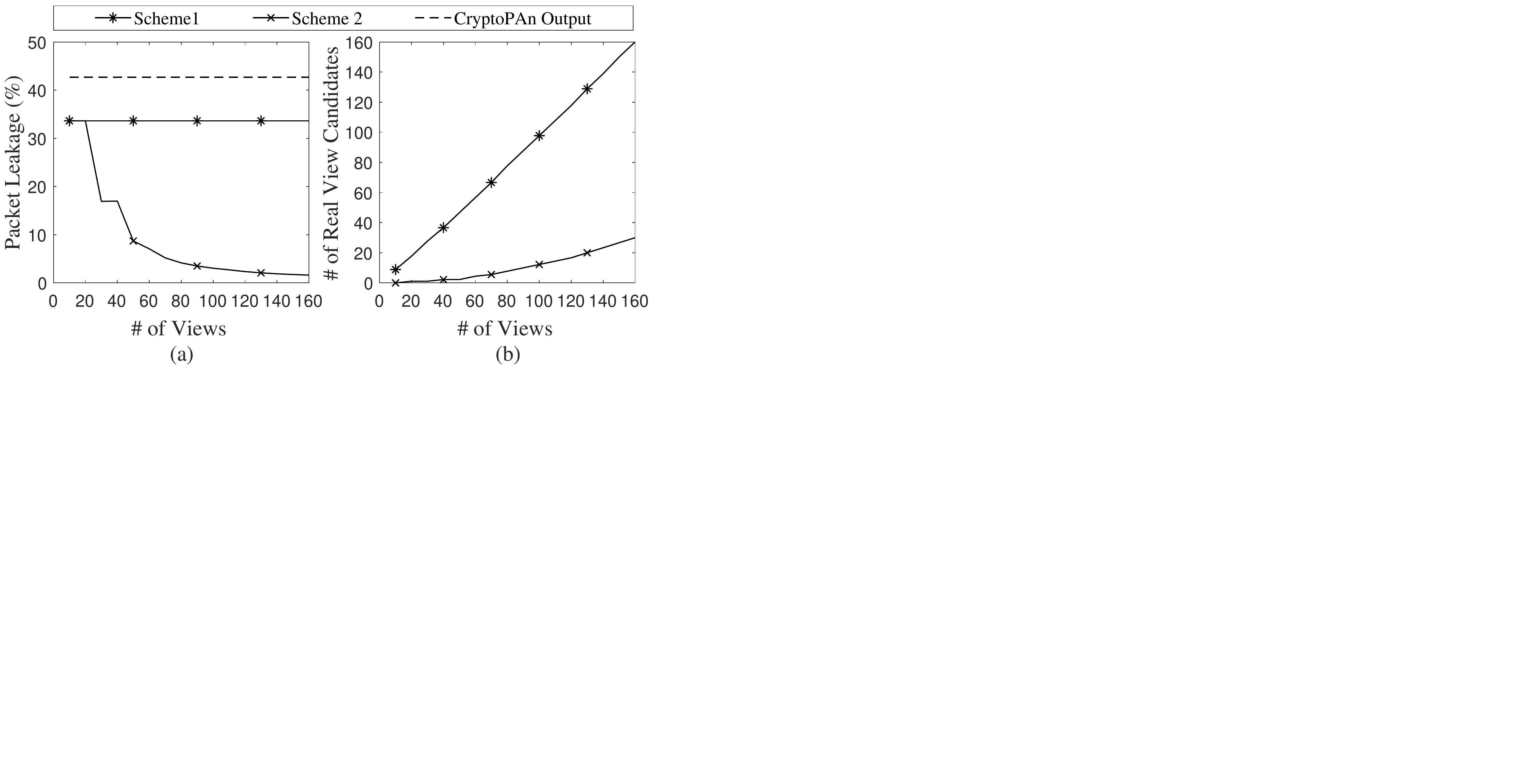}
		\caption{Comparison between scheme~\RN{1} and scheme~\RN{2} with $137$ partitions (prefix groups based on first octet sharing). Figure (a): Percentage of the compromised packets (out of 1M) and Figure (b): Number of real view candidates for $40\%$ adversary knowledge}
		\label{fig:a233}
	\end{center}
\end{figure}
\subsection{Multi-view and the Scalability of ORAM}
\label{app:over2}
In practice, we expect analysis reports would have significantly smaller sizes in comparison to the views, and considering the one round communication with ORAM $(O(log N)$-complexity), we believe the solution would have acceptable scalability. Experiments using our dataset and existing ORAM implementation (an implementation~\cite{dong} of non-recursive Path-ORAM~\cite{path} has been made public) would further confirm this. We generated various set of analyses reports using \emph{snort}~\cite{snort}, and we found that for our dataset the size of audit reports are in the range of KB which is perfect to be used in fast ORAM protocols, e.g., Path-ORAM. Specifically, for Path-ORAM, Figure 5 (b) in~\cite{dong} shows a less than 1MB communication overhead for the worst-case cost of up to $2^{24}$ number of blocks of size 4KB.
 \section{Algorithms}
  \begin{algorithm}[ht]
	\SetAlgoLined
	\caption{Data owner: Trace anomymization (scheme~\RN{1})}
	\label{alg:data-owner-actions}
	\textbf{Input:}\\
	\ \ \ \ \ \ \   $\mathcal{L}$: Original network trace\\
	\ \ \ \ \ \ \	$K_0$, $K$:  Cryptographic keys \\
    \ \ \ \ \ \ \	$d$: Number of  prefix groups\\
	\ \ \ \ \ \ \   $partition(\mathcal{L},d)$: IP partitioning\\
	\ \ \ \ \ \ \	$r$: Iteration number of the real view.\\
	\ \ \ \ \ \ \	$V$: Random vectors, of size $d$ \\
	\textbf{Output:} \\
	\ \ \ \ \ \ \	$\mathcal{L}^*$: Anonymized trace to be outsourced \\
	\textbf{Function: anonymize} ($\mathcal{L}, d, K_0, K, r, V$) \\
	\textbf{begin} \\
	1 \ \ \ \ \ \ \	$\mathcal{L}:=PP(\mathcal{L},K_0)$ \\
	2  \ \ \ \ \ \ \	$V_{0}:= -r \cdot V$ \\
	3 \ \ \ \ \ \ \	$P:=partition(\mathcal{L},d)$ \\
	4 \ \ \ \ \ \ \	$\mathcal{L}^*:=\phi$\\
	5 \ \ \ \ \ \ \	\textbf{foreach} \ \ $P_{i} \in P$ \ \ \ \textbf{do:} \\
	6 \ \ \ \ \ \ \	\ \ \	$\mathcal{L}_{i} := $ GetFlows{($\mathcal{L},P_{i}$)} \\
	7 \ \ \ \ \ \ \ \ \ \	$\mathcal{L}^*_{i} := PP(\mathcal{L}_{i},V_0(i),K)$\\
	8 \ \ \ \ \ \ \ \ \ \	$\mathcal{L}^* := \mathcal{L}^* \cup \mathcal{L}^*_{i}$\\

	9 \ \ \ \ \ \ \ \ \textbf{end} \\
    10 \ \ \ \  	\textbf{return} $\mathcal{L}^*$, $K, V, N$ \\
	\textbf{end}\\
\end{algorithm}
\begin{algorithm}[ht]
	\SetAlgoLined
	\caption{Analyst: Network trace analysis (scheme~\RN{1})}	
	\label{alg:analyst-actions}
	\textbf{Input:}\\
	\ \ \ \ \ \ \	$\mathcal{L}^*$: Seed trace\\
	\ \ \ \ \ \ \	$N$: Number of iterations requested by\\ 	\ \ \ \ \ \ \ 	\ \ \ \ \  data owner\\
    \ \ \ \ \ \ \	$d$: Number of  prefix groups\\
	\ \ \ \ \ \ \	$partition(\mathcal{L^*},d)$: IP partitioning\\
	\ \ \ \ \ \ \	$K$: Outsourced key \\
	\ \ \ \ \ \ \	$V$: Vector of size $d$ defined by data owner\\
	\ \ \ \ \ \ \	$CV(\mathcal{L}^*$): Compliance verification \\
	\textbf{Output:}\\
	\ \ \ \ \ \ \	$\Gamma_{i}$: Analysis report of $i^{th}$ view $\mathcal{L}^*_{i}$,\\ 	\ \ \ \ \ \ \ \ \ \ \ \  $i\in\{1,2,\ldots,,N\}$ \\
	\textbf{Function: analysis} ($\mathcal{L}^*, d, partition(\mathcal{L}^*,d), K, N, V$)\\
	\textbf{begin} \\
	1\ \ \ \ \ \ \	$P:=partition(\mathcal{L^*},d)$ \\
	2\ \ \ \ \ \ \	\textbf{for} $i=1:N$ \ \ \textbf{do:} \\
	3\ \ \ \ \ \ \ \ \ \	$\mathcal{L}_{i}^*:=\phi$\\
	4\ \ \ \ \ \ \ \ \ \	\textbf{foreach} \ \ $P_{j} \in P$ \ \ \ \textbf{do:}\\
	6\ \ \ \ \ \ \ \ \ \ \ \ \	$\mathcal{L}_{i,j}:=$ GetFlows{($\mathcal{L^*}_{i-1},P_{j}$)} \\
	
	7\ \ \ \ \ \ \ \ \ \ \ \ \	$\mathcal{L}^*_{i,j}:= PP(\mathcal{L}_{i,j},V(j), K)$\\
	8\ \ \ \ \ \ \	\ \ \ \ \ \	$\mathcal{L}^*_{i}:= \mathcal{L}^*_{i} \cup \mathcal{L}^*_{i,j}$\\
	9\ \ \ \ \ \ \	\ \ \	\textbf{end}\\
	10\ \ \ \ \ \ \	\ \ 	$\Gamma_{i}:=CV(\mathcal{L}^*_{i}$) \\
	11\ \ \ \ \ \ \	\ \ 	\textbf{return} $\Gamma_{i}$\\
	12\ \ \ \ \ \ \	\textbf{end}\\
	\textbf{end}
\end{algorithm}

\label{algs}
Following Algorithms are summarized versions of the data owner's and the analyst's roles in our multi-view scheme presented in section~\ref{sec:instance}. \\
\textbf{Algorithm~\ref{alg:data-owner-actions}:} The data owner's actions (scheme~\RN{1}).\\
\textbf{Algorithm~\ref{alg:analyst-actions}:} The analyst's actions (scheme~\RN{1}).\\
\textbf{Algorithm~\ref{alg:data-owner-actions1}:} The data owner's actions (scheme~\RN{2}).\\
\textbf{Algorithm~\ref{alg:analyst-actions1}:} The analyst's actions (scheme~\RN{2}).
\section{Complexity Analysis}
\label{complexityanalysis}
Here, we discuss the overhead analysis, from both the data owner's and the data analyst's side. In particular, table~\ref{relatedss1} summarizes the overhead for all the action items in the data owner side. Here, $C(n)$ is the computation overhead of CryptoPAn and $D$ is the number of the distinct IP addresses. Finally, table~\ref{relatedss2} summarizes the overhead for all the action items in the data analyst side where $N.CV(n)$ is the cost of $N$ times verifying the compliances (auditing).
\begin{table}[ht]
	\caption{Overhead on the data owner side}
	\label{relatedss1}
	\centering
	\begin{adjustbox}{max width=3.2in}
    \LARGE
		\begin{tabular}{c|c|c }
			\hline
			Blocks in Multi-view & Computation Overhead & Communication Overhead \\
			\hline
			 Initial anonymization& $C(n)$ & ---\\
			\hline
 Migration function & $O(n log^{n})+\frac{\sum^d_{i=1} \mathcal{C}(i)}{d} C(n)$& --- \\
			\hline
Prefix grouping& --- & ---\\
\hline
Index generator& $ N.O(D)$ & $ N.O(D)$\\
\hline
Seed trace & $\frac{\sum^D_{i=1} V_0(i)}{D} C(n)$ & $O(n)$ \\
\hline
Report retrieval (ORAM) &---& $O(log^N)  \omega (1)$\\
\hline
		\end{tabular}
	\end{adjustbox}
\end{table}
\begin{table}[ht]
	\caption{Overhead on the data analyst side}
	\label{relatedss2}
	\centering
	\begin{adjustbox}{max width=3.2in}
    \LARGE
		\begin{tabular}{c|c|c }
			\hline
			Blocks in Multi-view & Computation Overhead & Communication Overhead \\
			\hline
 Seed view & ---& $O(n)$ \\
			\hline
N views generation& $\frac{\sum^N_{i=1} \sum^D_{j=1} V_{i}(j)}{D} C(n)$ & ---\\
\hline
Compliance verification (Analysis) & $N.CV(n)$& --- \\
\hline
		\end{tabular}
	\end{adjustbox}
\end{table}
\begin{algorithm}[ht]
	\SetAlgoLined
	\caption{Data owner: Trace anomymization (scheme~\RN{2})}
	\label{alg:data-owner-actions1}
	\textbf{Input:}\\
	\ \ \ \ \ \ \   $\mathcal{L}$: Original network trace\\
	\ \ \ \ \ \ \	$K_0$, $K$:  Cryptographic keys \\
	\ \ \ \ \ \ \	$D$: Number of  IPs\\
    \ \ \ \ \ \ \	$d$: Number of  prefix groups\\
    \ \ \ \ \ \ \   $Migration(\mathcal{L},d)$: Migration function \\
	\ \ \ \ \ \ \   $partition(\mathcal{L},D)$: IP partitioning\\
	\ \ \ \ \ \ \	$r$: Iteration number of the real view.\\
	\ \ \ \ \ \ \	$V_0,V_1,\cdots,V_N$: Vectors of size $D$ defined by data owner\\
	\textbf{Output:} \\
	\ \ \ \ \ \ \	$\mathcal{L}^*$: Anonymized trace to be outsourced \\
	\textbf{Function: anonymize} ($\mathcal{L}, d, D, K_0, K, r, V$) \\
	\textbf{begin} \\
	1-1 \ \ \ \ 	$\mathcal{L}:=PP(\mathcal{L},K_0)$ \\
    1-2 \ \ \ \ 	$\mathcal{L}:=Migration(\mathcal{L},d)$ \\
	2  \ \ \ \ \ \ \	$V_{0}:= -r \cdot V$ \\
	3 \ \ \ \ \ \ \	$P:=partition(\mathcal{L},D)$ \\
	4 \ \ \ \ \ \ \	$\mathcal{L}^*:=\phi$\\
	5 \ \ \ \ \ \ \	\textbf{foreach} \ \ $P_{i} \in P$ \ \ \ \textbf{do:} \\
	6 \ \ \ \ \ \ \	\ \ \	$\mathcal{L}_{i} := $ GetFlows{($\mathcal{L},P_{i}$)} \\
	7 \ \ \ \ \ \ \ \ \ \	$\mathcal{L}^*_{i} := PP(\mathcal{L}_{i},V_0(i),K)$\\
	8 \ \ \ \ \ \ \ \ \ \	$\mathcal{L}^* := \mathcal{L}^* \cup \mathcal{L}^*_{i}$\\

	9 \ \ \ \ \ \ \textbf{end} \\
    10 \ \ \ \  	\textbf{return} $\mathcal{L}^*$, $K, V_1,V_2,\cdots,V_N, N$ \\
	\textbf{end}\\
\end{algorithm}
\begin{algorithm}[ht]
	\SetAlgoLined
	\caption{Analyst: Network trace analysis (scheme~\RN{2})}	
	\label{alg:analyst-actions1}
	\textbf{Input:}\\
	\ \ \ \ \ \ \	$\mathcal{L}^*$: Seed trace\\
	\ \ \ \ \ \ \	$N$: Number of iterations requested by\\ 	\ \ \ \ \ \ \ 	\ \ \ \ \  data owner\\
        \ \ \ \ \ \ \	$d$: Number of  prefix groups\\
	\ \ \ \ \ \ \	$D$:  Number of partitions \\
	\ \ \ \ \ \ \	$partition(\mathcal{L^*},D)$: IP partitioning\\
	\ \ \ \ \ \ \	$K$: Outsourced key \\
	\ \ \ \ \ \ \	$V_0,V_1,\cdots,V_N$: Vectors of size $D$ defined by data owner\\
	\ \ \ \ \ \ \	$CV(\mathcal{L}^*$): Compliance verification \\
	\textbf{Output:}\\
	\ \ \ \ \ \ \	$\Gamma_{i}$: Analysis report of $i^{th}$ view $\mathcal{L}^*_{i}$,\\ 	\ \ \ \ \ \ \ \ \ \ \ \  $i\in\{1,2,\ldots,,N\}$ \\
	\textbf{Function: analysis} ($\mathcal{L}^*, D, partition(\mathcal{L}^*,D), K, N, V_1,V_2,\cdots,V_N$)\\
	\textbf{begin} \\
	1\ \ \ \ \ \ \	$P:=partition(\mathcal{L^*},D)$ \\
	2\ \ \ \ \ \ \	\textbf{for} $i=1:N$ \ \ \textbf{do:} \\
	3\ \ \ \ \ \ \ \ \ \	$\mathcal{L}_{i}^*:=\phi$\\
	4\ \ \ \ \ \ \ \ \ \	\textbf{foreach} \ \ $P_{j} \in P$ \ \ \ \textbf{do:}\\
	6\ \ \ \ \ \ \ \ \ \ \ \ \	$\mathcal{L}_{i,j}:=$ GetFlows{($\mathcal{L^*}_{i-1},P_{j}$)} \\
	
	7\ \ \ \ \ \ \ \ \ \ \ \ \	$\mathcal{L}^*_{i,j}:= PP(\mathcal{L}_{i,j},V_{i}(j), K)$\\
	8\ \ \ \ \ \ \	\ \ \ \ \ \	$\mathcal{L}^*_{i}:= \mathcal{L}^*_{i} \cup \mathcal{L}^*_{i,j}$\\
	9\ \ \ \ \ \ \	\ \ \	\textbf{end}\\
	10\ \ \ \ \ \ \	\ \ 	$\Gamma_{i}:=CV(\mathcal{L}^*_{i}$) \\
	11\ \ \ \ \ \ \	\ \ 	\textbf{return} $\Gamma_{i}$\\
	12\ \ \ \ \ \ \	\textbf{end}\\
	\textbf{end}
\end{algorithm}

\end{document}